\pgfplotsset{compat=1.11,
        /pgfplots/ybar legend/.style={
        /pgfplots/legend image code/.code={%
        \draw[##1,/tikz/.cd,bar width=3pt,yshift=-0.2em,bar shift=0pt]
                plot coordinates {(0cm,0.8em)};},
},
}
\declaretheorem[name=Definition]{Definition}
\numberwithin{Definition}{section}
\declaretheorem[name=Property,sibling=Definition]{Property}
\numberwithin{Property}{section}
\numberwithin{Hypothesis}{section}
\declaretheorem[name=Theorem,sibling=Definition]{Theorem}
\numberwithin{Theorem}{section}
\numberwithin{Corollary}{section}
\declaretheorem[name=Example,sibling=Definition]{Example}
\numberwithin{Example}{section}
\numberwithin{Proposition}{section}
\declaretheorem[name=Lemma,sibling=Definition]{Lemma}
\numberwithin{Lemma}{section}
\numberwithin{Grammar}{section}
\numberwithin{Algorithm}{section}
\numberwithin{Assumption}{section}
\let\oldnl\nl
\newcommand{\noln}{\renewcommand{\nl}{\let\nl\oldnl}}
\newcommand{\ignore}[1]{}
\renewcommand{\paragraph}[1]{\noindent\textbf{#1.}}
\numberwithin{equation}{section}
\newcommand\nicoleta[1]{\textcolor{blue}{Nicoleta: #1}}
\begin{document}

\ifthenelse{\boolean{final}}{
    \title{Query Rewriting On Path Views\\Without Integrity Constraints}}
    {\title{Query Rewriting On Path Views\\Without Integrity Constraints}}
%
%
\author{Julien Romero\inst{1}\orcidID{0000-0002-7382-9077} \and
Nicoleta Preda\inst{2} \and
Fabian Suchanek\inst{1}}
\authorrunning{J. Romero et al.}
%
\institute{LTCI, Télécom Paris, Institut Polytechnique de Paris
\email{\{julien.romero,fabian.suchanek\}@telecom-paris.fr} \and
University of Versailles
\email{nicoleta.preda@uvsq.fr}}
\maketitle              
\begin{abstract}
A view with a binding pattern is a parameterised query on a database. Such views are used, e.g., to model Web services. To answer a query on such views, one has to orchestrate the views together in execution plans. The goal is usually to find equivalent rewritings, which deliver precisely the same results as the query on all databases. However, such rewritings are usually possible only in the presence of integrity constraints – and not all databases have such constraints.
In this paper, we describe a class of plans that give practical guarantees about their result even if there are no integrity constraints. We provide a characterisation of such plans and a complete and correct algorithm to enumerate them. Finally, we show that our method can find plans on real-world Web Services.

\end{abstract}

\section{Introduction}

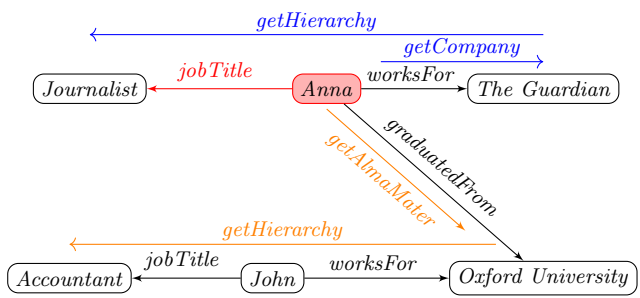
\begin{figure}
    \centering
    \tikzstyle arrowstyle=[blue,semitransparent,scale=2]
\tikzstyle basiclabel=[draw=none,fill=none,shape=rectangle,inner sep=2pt,scale=.8]
\tikzstyle leftlabel=[basiclabel,anchor=east]
\tikzstyle rightlabel=[basiclabel,anchor=west]
\tikzstyle bottomlabel=[basiclabel,anchor=north]
\tikzstyle toplabel=[basiclabel,anchor=south]

\definecolor{lightblue}{cmyk}{0.12,0,0,0.15} 
\tikzstyle{block} = [rectangle, draw, fill=white, 
text centered, rounded corners]

\definecolor{darkpastelgreen}{rgb}{0.01, 0.75, 0.24}
\definecolor{dartmouthgreen}{rgb}{0.05, 0.5, 0.06}
\definecolor{forestgreen}{rgb}{0.0, 0.27, 0.13}
\definecolor{lasallegreen}{rgb}{0.03, 0.47, 0.19}

\tikzstyle{headvar} = [rectangle, draw, fill=lightblue!70,
text centered, rounded corners]    
\tikzstyle{labelcase} = [rectangle,    text centered]
\tikzstyle{background}=[rectangle,   fill=lightblue!70,
inner sep=0.2cm,
rounded corners=5mm]

\tikzstyle{line} = [draw, -latex']
\tikzstyle{linenoarrow} = [draw]
\tikzstyle{invisibleline} = [-latex',sloped]
\tikzstyle{dashedline} = [draw, dashed]
\tikzstyle{inputtar} = [rectangle, draw, fill=lightblue!70, 
text centered, rounded corners]

\tikzstyle{inputvar} = [rectangle, draw=red,  fill=red!30,
text centered, rounded corners]

\begin{tikzpicture}[scale=0.9, transform shape]

\node [inputvar] (z) {\textit{Anna}};
\node [block] (y) at ($(z)+(+3.2cm,0cm)$) {\textit{The Guardian}};
\node [block] (x) at ($(z)+(-3.5cm,0cm)$) {\textit{Journalist}};
\node [block] (t) at ($(y)+(-0.0cm,-2.8cm)$){\textit{Oxford University}};
\path [line] (z) --node [above,align=center] {{\textit{worksFor}} } (y);
\path [line,red] (z) --node [above,align=center] {{\textit{jobTitle}} } (x);
\path [line] (z) --node [sloped,above, align=center] {{\textit{graduatedFrom}} } (t);

\coordinate (zu) at   ($(z) + (+0.8cm,+0.4cm) $); 
\coordinate (yu) at   ($(y) + (+0cm,+0.4cm) $); 
\draw [blue,->]     (zu) to  (yu);
\node [labelcase,blue] (lf1) at   ($0.5*(zu) + 0.5*(yu) + (0,+0.2cm)$) {\textit{getCompany}};

\coordinate (xb) at   ($(x) + (+0cm,+0.8cm) $);
\coordinate (yb) at   ($(y) +  (+0cm,+0.8cm) $);
\draw [->,blue]    (yb) to (xb);
\node [labelcase,blue] (lf2) at   ($0.5*(xb) + 0.5*(yb) + (0cm,+0.19cm)$) {\textit{getHierarchy}};

\node [block] (z2) at ($(t)+(-4cm,0cm)$) {\textit{John}};
\node [block] (x2) at ($(z2)+(-3cm,0cm)$) {\textit{Accountant}};

\path [line] (z2) --node [above,align=center] {{\textit{worksFor}} } (t);
\path [line] (z2) --node [above,align=center] {{\textit{jobTitle}} } (x2);

\coordinate (xb2) at   ($(x2) + (+0cm,+0.5cm) $);
\coordinate (yb2) at   ($(t) +  (-0.7cm,+0.5cm) $);
\draw [->, orange]    (yb2) to (xb2);
\node [labelcase,orange] (lf22) at   ($0.5*(xb2) + 0.5*(yb2) + (0cm,+0.2cm)$) {\textit{getHierarchy}};

\coordinate (zr) at   ($(z) + (-0cm,-0.3cm) $); 
\coordinate (tr) at   ($(yb2) + (-0.45cm,0.2cm) $); 
\path [line, orange] (zr) --node [sloped,below, align=center] {{\textit{getAlmaMater}} } (tr);


\end{tikzpicture}
    \caption{An equivalent execution plan (blue) and a maximal contained rewriting (orange) executed on a database instance (black). \label{fig:ex1}}
\end{figure}

A view with binding patterns is a parameterised query defined in terms of a global schema~\cite{HalevyAnswering2001}.
Such a query works like a function: it requires specific values as input and delivers the query results as output. For example, consider the database instance about employees at  Figure~\ref{fig:ex1}. The call to the function \emph{getCompany} with an employee \textit{Anna} as input, returns the company \textit{The Guardian} as output. Abstractly, the function is represented as the rule: $\textit{getCompany}(\textit{in}, \textit{out}) \leftarrow \textit{worksFor}(\textit{in}, \textit{out})$. The \textit{worksFor} relation is of the global schema, which is orthogonal to the schema of the actual data.  
Unlike query interfaces like SPARQL endpoints, functions prevent arbitrary access to the database engines. In particular, one can model Web forms or REST Web Services as views with binding patterns. 
According to \href{http://programmableweb.com}{programmableweb.com}, there are currently more than 22.000 such REST Web Services.

If we want to answer a query on a global database that can be accessed only through functions, we have to orchestrate the functions into an execution plan.
In our example from Figure~\ref{fig:ex1}, if we want to find the job title of Anna, we first have to find her company (by calling \emph{getCompany}), and then her job title (by calling \emph{get\-Hierarchy} on her company, and filtering the results about Anna). Our problem is thus as follows: Given a user query (such as \emph{job\-Title(Anna, $x$))} and a set of functions (each being a parameterised conjunctive query), find an execution plan (i.e., a sequence of function calls) that delivers the answer to the query on a database that offers these functions. 
While the schema of the database is known to the user, she or he does not know whether the database contains the answer to the query at all.

Much of the literature concentrates on finding \emph{equivalent rewritings}, i.e., execution plans that deliver the same result as the original query on all databases that offer this specific set of functions. Unfortunately, our example plan is not an equivalent rewriting: it will deliver no results on databases where (for whatever reasons) Anna has a job title but no employer. The plan is equivalent to the query only if an integrity constraint stipulates that every person with a job title must have an employer and the database instance is complete.

Such constraints are hard to come by in real life, because they may not hold (a person can have a job title but no employer; a person may have a birth date but no death date; some countries do not have a capital\footnote{e.g., the Republic of Nauru}; etc.). Even if they hold in real life, they may not hold in the database due to the incompleteness of the data. Hence, they are also challenging to mine automatically. In the absence of constraints, however, an atomic query has an equivalent rewriting only if there is a function that was defined precisely for that query.

 \emph{Maximally contained rewritings} are preferred to equivalent rewritings in data integration systems.  There, the databases are incomplete, and the equivalent rewritings usually fail to deliver results. Intuitively speaking, maximally contained rewritings are execution plans that try to find all calls that could potentially lead to an answer. In our example, the plan \emph{get\-Alma\-Mater}, \emph{get\-Hierarchy} is included in the maximally contained rewriting: It asks for the university where Anna graduated, and for their job positions. If Anna happens to work at the university where she graduated, this plan will answer the query.

This plan appears somehow less reasonable that our first plan because it works only for people who work at their alma mater. However, both plans are equal concerning their formal guarantees: none of them can guarantee to deliver the answers to the query. This is a conundrum: Unless we have information about the data distribution or more schema information, we have no formal means to give the first plan higher chances of success than the second plan -- although the first plan is intuitively much better.

In this paper, we propose a solution to this conundrum: 
We can show that the first plan (\emph{get\-Company}, \emph{get\-Hierarchy}) is ``smart'', in a sense that we formally define. We can give guarantees about the results of smart plans in the absence of integrity constraints. We also give an algorithm that can enumerate all smart plans for a given \textbf{atomic query} and \textbf{path-shaped functions} (as in Figure~\ref{fig:ex1}). We show that under a condition that we call the \emph{Optional Edge Semantics} our algorithm is complete and correct, i.e., it will exhaustively enumerate all such smart plans. We apply our method to real Web services and show that smart plans work in practice and deliver more query results than competing approaches.

This paper is structured as follows: Section~\ref{sec:cikm-rel} discusses related work, Section~\ref{sec:cikm-prel} introduces preliminaries, and Section~\ref{sec:cikm-def} gives a definition of smart plans. Section~\ref{sec:cikm-rec} provides a method to characterise smart plans, and Section~\ref{sec:generation-main} shows there exists an algorithm that can generate smart plans. We provide extensive experiments on synthetic and real Web services to show the viability of our method in Section~\ref{sec:cikm-experiments}.

\ifthenelse{\boolean{final}}{
    All the proofs and technical details are in the appendix of the accompanying technical report~\cite{technicalreport}.   
}
{
All the proofs and technical details are in the appendix.
}

\section{Related Work}\label{sec:cikm-rel}

\paragraph{Equivalent Rewritings} An equivalent rewriting of a query is an alternative formulation of the query that has the same results as the query on all databases.
Equivalent rewritings have also been studied in the context of views with binding patterns~\cite{benedikt2016generating,romero2020equivalent}. However,
they may not be sufficient to answer the query \cite{HalevyAnswering2001}. Equivalent rewritings rely on integrity constraints, which may not be available. These constraints are difficult to mine, as most real-life rules have exceptions. Also, equivalent rewritings may falsely return empty answers only because the database instance is incomplete with respect to the integrity constraints. We aim to come up with new relevant rewritings that still offer formal guarantees about their results. 

\paragraph{Maximally Contained Rewriting} In data integration applications, where databases are incomplete, and equivalent rewritings are likely to fail, maximally contained rewritings have been proposed as an alternative. 
A maximally contained rewriting is a query expressed in a chosen language that retrieves the broadest possible set of answers~\cite{HalevyAnswering2001}.
By definition, the task does not distinguish between intuitively more reasonable rewritings and rewritings that stand little chance to return a result on real databases. For views with binding patterns, the problem has been studied for different rewriting languages and under different constraints~\cite{DuschkaL97,CaliICDE2008,NashL04}. 
Some works~\cite{angie,susie} propose to prioritise the execution of the calls in order to produce the first results fast. While the first work~\cite{angie} does not give guarantees about the plan results, the second one~\cite{susie} can give guarantees only for very few plans. Our work is much more general and includes all the plans generated by~\cite{susie}, as we will see.


\paragraph{Plan Execution} Several works study how to optimise given execution plans~\cite{Srivastava06,WEB_COMP_DATA_INTEGRATION_VLDB2005}. Our work, in contrast, aims at \emph{finding} such execution plans. 

\paragraph{Federated Databases} In federated databases~\cite{BuronGMM20,AebeloeMH19},
a data source supports any queries in a predefined language. In our setting, in contrast, the database can be queried only through \emph{functions}, i.e., specific predefined queries with input parameters.



\section{Preliminaries} \label{sec:cikm-prel}

We use the terminology of~\cite{romero2020equivalent}, and recall the definitions briefly.

\paragraph{Global Schema} We assume a set $\mathcal{C}$ of constants and a set $\mathcal{R}$ of binary relation names.
A \textit{fact} $r(a, b)$ is formed from a relation name $r \in \mathcal{R}$ and two constants $a, b \in \mathcal{C}$. A
\textit{database instance} $I$, or simply \textit{instance}, is a set of facts. 

\paragraph{Queries} An \textit{atom} takes the form $r(\alpha,\beta)$, where $r \in \mathcal{R}$, and $\alpha$ and $\beta$ are either constants or variables. It can be equivalently written as $r^-(\beta,\alpha)$. 
A \emph{query} takes the form:
\[q(\alpha_1,...,\alpha_m) \leftarrow  B_1,...,B_n\]
where $\alpha_1,...\alpha_m$ are variables, each of which must appear in at least one of the body atoms $B_1,...B_n$. We assume that queries are \emph{connected}, i.e., each body atom must be transitively linked to every other body atom by shared variables.

An \emph{embedding} for a query $q$ on a database instance $I$ is a substitution $\sigma$ for the variables of the body atoms so that $\forall B\in\{B_1,...,B_n\}: \sigma(B)\in I$. A \emph{result} of a query is an embedding projected to the variables of the head atom. We write $q(\alpha_1,...,\alpha_m)(I)$ for the results of the query on $I$. An \emph{atomic query} is a query that takes the form $q(x) \leftarrow r(a, x)$, where $a$ is a constant and $x$ is a variable.
A \textit{path  query} is a query of the form:
\[q(x_i)\leftarrow r_1(a,x_1), r_2(x_1,x_2), ..., r_{n}(x_{n-1},x_n)\]
where $a$ is a constant, $x_i$ is the output variable,
each $x_j$ except $x_i$ is either a variable or the constant~$a$, and $1 \leq i \leq n$.

\paragraph{Functions}
\label{sec:function-definition}
We model functions as views with binding patterns~\cite{BINDING-PATTERNS-PODS1995}, namely:
\[f(\underline{x}, y_1,...,y_m) \leftarrow  B_1,...,B_n\]
Here, $f$ is the function name, $x$ is the \emph{input variable} (which we underline), $y_1,...,y_m$ are the \emph{output variables}, and any other variables of the body atoms are \emph{existential variables}. 
In this paper, we are concerned with \emph{path functions}, where the body atoms are ordered in a sequence $r_1(\underline{x}, x_1), r_2(x_1,x_2),...,r_n(x_{n-1}, x_n)$. The first variable of the first atom is the input of the plan, the second variable of each atom is the first variable of its successor, and the output variables follow the order of the atoms.

\noindent \emph{Calling} a function for a given value of the input variable means finding the result of the query given by the body of the function on a database instance.

\paragraph{Plans}
A \emph{plan} takes the form 
\[\pi(x) = c_1, \ldots, c_n, \gamma_1=\delta_1, \ldots, \gamma_m=\delta_m\] 
Here, $a$ is a constant and $x$ is the output variable. Each $c_i$ is a \emph{function call} of the form $f(\underline{\alpha}, \beta_1, \ldots, \beta_n)$, where $f$ is a function name, the input $\alpha$ is either a constant or a variable occurring in some call in $c_1, \ldots, c_{i-1}$, and the outputs $\beta_1, \ldots, \beta_n$ are variables. Each $\gamma_j=\delta_j$ is called a \emph{filter}, where $\gamma_j$ is an output variable of any call, and $\delta_j$ is either a variable that appears in some call or a constant.
If the plan has no filters, then we call it \emph{unfiltered}.
The \textit{semantics} of the plan is the query:
\[
q(x) \leftarrow \phi(c_1), \ldots, \phi(c_n), \gamma_1=\delta_1, \ldots, \gamma_m=\delta_m
\]
where each $\phi(c_i)$ is the body of the query defining the function $f$ of the call~$c_i$ in which we have substituted the constants and variables used in $c_i$. We have used fresh existential variables across the different $\phi(c_i)$, where $x$ is the output variable of the plan, and where $\cdot =\cdot$ is an atom that holds in any database instance if and only if its two arguments are identical.

To \emph{evaluate} a plan on an instance means running the query above. In practice, this boils down to calling the functions in the order given by the plan.  Given an execution plan $\pi_a$ and a database $I$, we call $\pi_a(I)$ the answers of the plan on $I$. 


\begin{Example}\label{ex1} Consider our example in Figure~\ref{fig:ex1}. There are 3 relation names in the database: \textit{worksFor}, \textit{jobTitle}, and \textit{graduatedFrom}. The functions are:
\begin{align*}
\textit{getCompany}(\underline{x},y)  & \leftarrow \textit{worksFor}(\underline{x}, y) \\
\textit{getHierarchy}(\underline{y},x,z)  & \leftarrow \textit{worksFor}^-(\underline{y},x), \textit{jobTitle}(x,z) \\
\textit{getEducation}(\underline{x}, y)  & \leftarrow \textit{graduatedFrom}(\underline{x},y)
\end{align*}

The following is an execution plan:
\begin{align*}
\pi_{1}(z) = & \textit{getCompany}(\underline{Anna},x), \textit{getHierarchy}(\underline{x},y,z), y=\textit{Anna}
\end{align*}
The first element is a function call to \textit{getCompany} with the name of the person (\textit{Anna}) as input, and the variable $x$ as output. 
The variable $x$ then serves as input in the second function call to \textit{get\-Hierarchy}. 
Figure~\ref{fig:ex1} shows the plan with an example instance. 
This plan computes the query:
\begin{align*}
    \textit{worksFor}(\textit{Anna},x), \textit{worksFor}^-(x, y),
    \textit{jobTitle}(y,z), y=\textit{Anna}
\end{align*}

\noindent In our example instance, we have the  embedding:
\[\sigma=\{x \xrightarrow{} \textit{The Guardian},  y \xrightarrow{} \textit{Anna}, z \xrightarrow{} \textit{Journalist} \}.
\]

\end{Example}

\noindent An execution plan $\pi$ is \emph{redundant} if it has no call using the constant $a$ as input, or if it contains a call where none of the outputs is an output of the plan or an input to another call. 

\label{def:eqv}
An \textit{equivalent rewriting} of an atomic query $q(x) \leftarrow r(a, x)$ is an execution plan that has the same results as $q$ on all database instances. For our query language, a \textit{maximally contained rewriting} for the query $q$ is a plan whose semantics contains the atom $r(a, x)$.

\section{Defining Smart Plans}\label{sec:cikm-def}

Given an atomic query, and given a set of path functions, we want to find a reasonable execution plan that answers the query.

\paragraph{Introductory Observations} Let us consider again the query $q(x) \leftarrow \textit{jobTitle}(\textit{Anna}, x)$ and the two plans in Figure~\ref{fig:ex1}. The first plan seems to be smarter than the second one. The intuition becomes more formal if we look at the queries in their respective semantics. The first plan is the plan $\pi_1(z)$ given in Example \ref{ex1}. Its semantics is the query: $\textit{worksFor}(\textit{Anna},x),\textit{worksFor}^-(x, y), \textit{jobTitle}(y, z), y=\textit{Anna}$. If the first atom has a match in the database instance, then $y=\textit{Anna}$ is indeed a match, and the plan delivers the answers of the query. If the first atom has no match in the database instance, then the plan returns no result, while the query may have one. To make the plan equivalent to the query on all database instances, we would need the following 
unary inclusion dependency: $jobTitle(x,y) \rightarrow \exists z: \textit{worksAt}(x,z)$.
In our setting, however, we cannot assume such an integrity constraint.
Let us now consider the second plan: 
\begin{align*}
\pi_{2}(z) = \textit{getEducation}(\underline{\textit{Anna}}, x), \textit{getHierarchy}(\underline{x},y,z),  y=\textit{Anna} \label{eq:oxford}
\end{align*}
\noindent Its  semantics are:  $\textit{graduatedFrom}(\textit{Anna},x),\textit{worksFor}^-(x, y), \textit{jobTitle}(y, z), y=\textit{Anna}$. To guarantee that $y=\textit{Anna}$ is a match, we need one constraint at the schema level: the inclusion dependency $\textit{graduatedFrom}(x,y) \rightarrow \textit{worksFor}(x,y)$. However, this constraint does not hold in the real world, and it is stronger than a unary inclusion dependency (which has an existential variable in the tail). Besides, $\pi_2$, similarly to $\pi_1$, needs the unary inclusion dependency $jobTitle(x,y) \rightarrow \exists z: \textit{graduatedFrom}(x,z)$ to be an equivalent rewriting.

\ignore{
Consider the following constrained execution plan: 
\begin{align}
 \pi_{\textit{Oxford}}(x) = &\textit{getAlbumDetails}(\underline{\textit{OxfordUniversity}},  
    y, x), y=\textit{Anna}\label{eq:constrained}
\end{align}

If this plan has results, it will definitively contain an answer to the query. And yet, it will only work if Anna is an employee of 

The problem is that we make assumptions on the database. Indeed, this plan has an answer if the database contains the entity \textit{University\_of\_Oxford} and the relation \textit{worksFor(Anna, \textit{University\_of\_Oxford})}. In general, it might not be the case without any assumption. If such a plan was acceptable, it means that it is sufficient to generate all plan containing the atom \emph{holdsPosition(Anna, $x$)} to get a ``smart'' plan. However, it quickly makes the number of plans infinite and without any way to say that one plan is more promising than another one. The problem with this plan is that its corresponding constraint-free version (i.e. obtained by removing all filters) $\textit{getAlbumDetails}(\underline{\textit{University\_of\_Oxford}},  y, x)$ has the problems we describe above.
}

\paragraph{Definition} In summary, the first plan, $\pi_{1}$, returns the query answers if all the calls return results. The second plan, $\pi_{2}$, may return query answers, but in most of the cases even if the calls are successful, their results are filtered out by the filter $y=\textit{Anna}$.  This brings us to the following definition of smart plans: 
\begin{Definition}[Smart Plan]\label{def:zsmart-plan}
Given an atomic query $q$ and a set of functions, a plan $\pi$ is \emph{smart} if the following holds on all database instances $I$: If the filter-free version of $\pi$ has a result on $I$, then $\pi$ delivers exactly the answers to the query.
\end{Definition}

\noindent 
We also introduce weakly smart plans:


\begin{Definition}[Weakly Smart Plan]\label{def:strong-smart-plan}
Given an atomic query $q$ and a set of functions, a plan $\pi$ is \emph{weakly smart} if the following holds on all database instances $I$ where $q$ has at least one result: If the filter-free version of $\pi$ has a result on $I$, then $\pi$ delivers a super-set of the answers to the query.
\end{Definition}

\noindent Weakly smart plans deliver a superset of the answers of the query, and thus do not actually help in query evaluation.  Nevertheless, weakly smart plans can be useful: 
For example, if a data provider wants to hide private information, like the phone number of a given person, they do not want to leak it in any way, not even among other results. Thus, they will want to design their functions in such a way that no \textit{weakly smart plan} exists for this specific query.

Every smart plan is also a weakly smart plan. Some queries will admit only weakly smart plans and no smart plans, mainly because the variable that one has to filter on is not an output variable.

\ignore{

\begin{figure*}[t]
    \centering
    \tikzstyle arrowstyle=[blue,semitransparent,scale=2]
\tikzstyle basiclabel=[draw=none,fill=none,shape=rectangle,inner sep=2pt,scale=.8]
\tikzstyle leftlabel=[basiclabel,anchor=east]
\tikzstyle rightlabel=[basiclabel,anchor=west]
\tikzstyle bottomlabel=[basiclabel,anchor=north]
\tikzstyle toplabel=[basiclabel,anchor=south]

\definecolor{lightblue}{cmyk}{0.12,0,0,0.15} 
\tikzstyle{block} = [rectangle, draw, fill=white, 
     text centered, rounded corners]
     
\tikzstyle{headvar} = [rectangle, draw, fill=lightblue!70,
     text centered, rounded corners]    
\tikzstyle{labelcase} = [rectangle,    text centered]
\tikzstyle{background}=[rectangle,   fill=lightblue!70,
                                                inner sep=0.2cm,
                                                rounded corners=5mm]

\tikzstyle{line} = [draw, -latex']
\tikzstyle{linenoarrow} = [draw]
\tikzstyle{invisibleline} = [-latex',sloped]
\tikzstyle{dashedline} = [draw, dashed]
\tikzstyle{inputtar} = [rectangle, draw, fill=lightblue!70, 
     text centered, rounded corners]

\tikzstyle{inputvar} = [rectangle, draw=red,  fill=red!30,
text centered, rounded corners]

\begin{tikzpicture}[scale=0.9, transform shape]






\node [inputvar, node distance=2.5cm] (z2) {\textit{Anna}};
\node [block] (y2) at ($(z2)+(+3.5cm,0cm)$) {\textit{The Guardian,}};
\node [block] (t2) at ($(y2)+(+4.5cm,0cm)$) {\textit{Kings Place, London}};
\node [block] (x2) at ($(z2)+(-3cm,0cm)$) {\textit{Journalist}};
\path [line] (z2) --node [above,align=center] {{\textit{worksFor}} } (y2);
\path [line,red] (z2) --node [above,align=center] {{\textit{jobTitle}} } (x2);
\path [line] (y2) --node [above,align=center] {{\textit{locatedIn}} } (t2);

\coordinate (zu2) at   ($(z2) + (-0.5cm,+0.4cm) $); 
\coordinate (tu2) at   ($(t2) + (+0cm,+0.4cm) $); 
\draw [blue,->]     (zu2) to  (tu2);
\node [labelcase,blue] (lf1) at   ($0.5*(zu2) + 0.5*(tu2) + (0,0.3cm)$) {\textit{getProfessionalAddress}};

\coordinate (xb2) at   ($(x2) + (+0cm,-0.4cm) $);
\coordinate (yb2) at   ($(y2) +  (-0.1cm,-0.4cm) $);
\draw [->,blue]    (yb2) to (xb2);
\node [labelcase,blue] (lf2) at   ($0.5*(xb2) + 0.5*(yb2) + (0cm,-0.3cm)$) {\textit{getHierarchy}};

\coordinate (tb2) at   ($(t2) + (+0cm,-0.4cm) $); 
\coordinate (yb2) at   ($(y2) + (+0.1cm,-0.4cm) $); 
\draw [blue,->]     (tb2) to  (yb2);
\node [labelcase,blue] (lf1) at   ($0.5*(tb2) + 0.5*(yb2) + (0,-0.3cm)$) {\textit{getEntityAtAddress}};

\end{tikzpicture}
    \caption{A smart plan for the query \textit{job\-Title}(\textit{Anna}, $?x$), which Susie will not find.\label{fig:cikm-plan-ex1}}
\end{figure*}

\paragraph{Comparison with Susie} The plans generated in Susie~\cite{susie} are smart according to our definition. However, they only represent a subset of the smart plans. 
For instance, consider again the query \textit{holdsPosition(Anna, $x$)}, and assume that, in addition to the function \emph{getHierarchy}, we have the following two functions:
\begin{align*}
   \textit{getProfessionalAddress}(\underline{x}, y, z) & \leftarrow \textit{worksFor}(\underline{x},y), \textit{locatedIn}(y,z) \\
   \textit{getEntityAtAddress}(\underline{x},y) & \leftarrow \textit{locatedIn}(\underline{x},y)
\end{align*}
\noindent Then the following plan is smart (see Figure~\ref{fig:cikm-plan-ex1}):
\begin{align*}
    \pi_{Anna}(x) = &\textit{getProfessionalAddress}(Anna, y, z),\\ &\textit{getEntityAtAddress}(\underline{z},y'), \textit{getHierarchy}(\underline{y'},t,x) 
\end{align*}
\noindent However, it will not be discovered by the algorithm in~\cite{susie}.}

\begin{figure*}[ht]
    \centering
    \tikzstyle arrowstyle=[blue,semitransparent,scale=2]
\tikzstyle basiclabel=[draw=none,fill=none,shape=rectangle,inner sep=2pt,scale=.8]
\tikzstyle leftlabel=[basiclabel,anchor=east]
\tikzstyle rightlabel=[basiclabel,anchor=west]
\tikzstyle bottomlabel=[basiclabel,anchor=north]
\tikzstyle toplabel=[basiclabel,anchor=south]

\definecolor{lightblue}{cmyk}{0.12,0,0,0.15} 
\tikzstyle{block} = [rectangle, draw, fill=white, 
text centered, rounded corners]

\definecolor{darkpastelgreen}{rgb}{0.01, 0.75, 0.24}
\definecolor{dartmouthgreen}{rgb}{0.05, 0.5, 0.06}
\definecolor{forestgreen}{rgb}{0.0, 0.27, 0.13}
\definecolor{lasallegreen}{rgb}{0.03, 0.47, 0.19}

\tikzstyle{headvar} = [rectangle, draw, fill=lightblue!70,
text centered, rounded corners]    
\tikzstyle{labelcase} = [rectangle,    text centered]
\tikzstyle{background}=[rectangle,   fill=lightblue!70,
inner sep=0.2cm,
rounded corners=5mm]

\tikzstyle{line} = [draw, -latex']
\tikzstyle{linenoarrow} = [draw]
\tikzstyle{invisibleline} = [-latex',sloped]
\tikzstyle{dashedline} = [draw, dashed]
\tikzstyle{inputtar} = [rectangle, draw, fill=lightblue!70, 
text centered, rounded corners]

\tikzstyle{inputvar} = [rectangle, draw=red,  fill=red!30,
text centered, rounded corners]

\begin{tikzpicture}[scale=0.9, transform shape]

\node [inputvar] (f1p1) {\textit{Anna}};
\node [block] (f1c1) at ($(f1p1)+(-3cm,0cm)$) {\textit{The Guardian}};
\node [block] (f1tel1) at ($(f1p1)+(+2.3cm,0cm)$) {\textit{0123}};

\path [line] (f1p1) --node [above,align=center] {{\textit{worksAt}} } (f1c1);
\path [line,red] (f1p1) --node [above,align=center] {{\textit{phone}} } (f1tel1);

\node [block] (f1dep1) at ($(f1p1)+(0,-2.5cm)$){\textit{Economy Section}};
\path [line] (f1p1) --node [sloped,above, align=center] {{\textit{headOf}} } (f1dep1);

\coordinate (f1xb) at   ($(f1p1) + (-0.5cm,-0.4cm) $);
\coordinate (f1yb) at   ($(f1c1) +  (+0cm,-0.4cm) $);
\coordinate (f1zb) at   ($(f1dep1) +  (-0.5cm,+0.4cm) $);
\draw [blue]    (f1yb) to (f1xb);
\draw [->,blue]    (f1xb) to (f1zb);
\node [labelcase,blue] (f1lf2) at   ($0.5*(f1xb) + 0.5*(f1yb) + (0cm,-0.18cm)$) {\textit{getSection}};

\coordinate (f1xc) at   ($(f1p1) + (+0.5cm,-0.4cm) $);
\coordinate (f1yc) at   ($(f1tel1) +  (+0cm,-0.4cm) $);
\coordinate (f1zc) at   ($(f1dep1) +  (+0.5cm,+0.4cm) $);
\draw [->,blue]    (f1xc) to (f1yc);
\draw [blue]    (f1xc) to (f1zc);
\node [labelcase,blue] (f1lf3) at   ($0.5*(f1xc) + 0.5*(f1yc) + (0cm,-0.18cm)$) {\textit{getPhone}};

\coordinate (f1p1up) at   ($(f1p1) + (+0.4cm,+0.4cm) $); 
\coordinate (f1c1up) at   ($(f1c1) + (-0.2cm,+0.4cm) $); 
\draw [blue,->]     (f1p1up) to  (f1c1up);
\node [labelcase,blue] (f1lf1) at   ($0.5*(f1p1up) + 0.5*(f1c1up) + (0,+0.2cm)$) {\textit{getCompany}};

\node [block] (p1) at ($(f1p1)+(+7cm,0cm)$) {\textit{James}};
\node [block] (c1) at ($(p1)+(-2.8cm,0cm)$) {\textit{The Guardian}};
\node [block] (tel1) at ($(p1)+(+2.5cm,0cm)$) {\textit{0123}};

\path [line] (p1) --node [above,align=center] {{\textit{worksAt}} } (c1);
\path [line] (p1) --node [above,align=center] {{\textit{phone}} } (tel1);

\node [block] (dep1) at ($(p1)+(0,-2.5cm)$){\textit{Economy Section}};
\path [line] (p1) --node [sloped,above, align=center] {{\textit{headOf}} } (dep1);

\coordinate (xb) at   ($(p1) + (-0.5cm,-0.4cm) $);
\coordinate (yb) at   ($(c1) +  (+0cm,-0.4cm) $);
\coordinate (zb) at   ($(dep1) +  (-0.5cm,+0.4cm) $);
\draw [blue]    (yb) to (xb);
\draw [->,blue]    (xb) to (zb);
\node [labelcase,blue] (lf2) at   ($0.5*(xb) + 0.5*(yb) + (0cm,-0.18cm)$) {\textit{getSection}};

\coordinate (xc) at   ($(p1) + (+0.5cm,-0.4cm) $);
\coordinate (yc) at   ($(tel1) +  (+0cm,-0.4cm) $);
\coordinate (zc) at   ($(dep1) +  (+0.5cm,+0.4cm) $);
\draw [->,blue]    (xc) to (yc);
\draw [blue]    (xc) to (zc);
\node [labelcase,blue] (lf3) at   ($0.5*(xc) + 0.5*(yc) + (0cm,-0.18cm)$) {\textit{getPhone}};

\node [inputvar] (p2) at ($(p1)+(0cm,1.4cm)$) {\textit{Anna}};
\path [line] (p2) --node [sloped,above, align=center] {{\textit{worksAt}} } (c1);

\coordinate (p2up) at   ($(p2) + (-0.4cm,+0.4cm) $); 
\coordinate (c1up) at   ($(c1) + (-0.2cm,+0.4cm) $); 
\path [line,blue] (p2up) --node [sloped,above, align=center] {{\textit{getCompany}} } (c1up);

\node [inputvar] (tel2) at ($(p2)+(+1.9cm,0cm)$) {?};
\path [line,red] (p2) --node [above,align=center] {{\textit{phone}} } (tel2);

\end{tikzpicture}
    \caption{A non-smart execution plan for the query \emph{phone(Anna,$x$)}. Left: a database where the plan answers the query. Right: a database where the unfiltered plan has results, but the filtered plan does not answer the query.\label{fig:phone}}
\end{figure*}
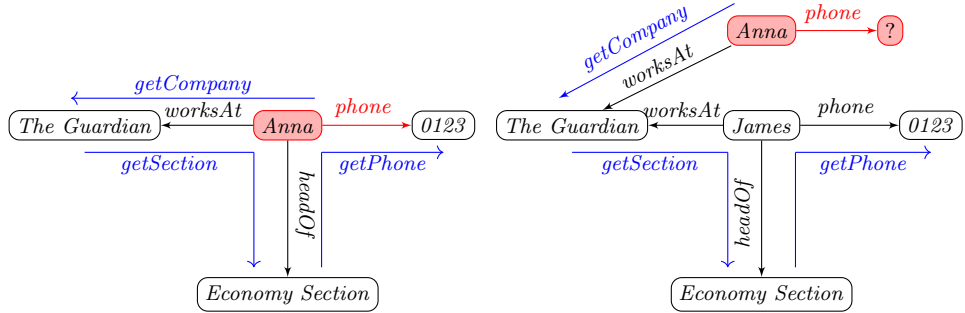

\paragraph{Smart plans versus equivalent plans} Consider again the plans $\pi_1$ (smart) and $\pi_2$ (not-smart) above. Both plans assume the existence of a unary inclusion dependency. The difference is that in addition,  $\pi_2$ relies on an additional role inclusion constraint. Is it thus sufficient to assume unary inclusion dependencies between all pairs of relations, 
and apply the algorithm in~\cite{romero2020equivalent} to find equivalent rewritings? 
The answer is no:
Figure~\ref{fig:phone} shows a plan that is equivalent if the necessary unary inclusion dependencies hold. However, the plan is not smart. On the database instance shown on the right-hand side, the unfiltered plan returns a non-empty set of results that does not answer the query. 


\paragraph{Problem} After having motivated and defined our notion of smart plans, we are now ready to state our problem: Given an atomic query, and a set of path functions, we want to enumerate all smart plans.
\section{Characterizing Smart Plans}\label{sec:cikm-rec}

\subsection{Web Service Functions}\label{sec:cikm-linear}

We now turn to generating smart plans. As previously stated, our approach can find smart plans only under a certain condition. This condition has to do with the way Web services work. Assume that for a given person, a function returns the employer and the address of the working place:
\[
\textit{getCompanyInfo}(\underline{x},y,z) \leftarrow \textit{worksAt}(\underline{x},y), \textit{locatedIn}(y,z)
\]

\noindent Now assume that, for some person, the address of the employer is not in the database. In that case, the call will not fail. Rather, it will return only the employer $y$, and return a null-value for the address $z$.  It is as if the atom \emph{locatedIn($y$, $z$)} were optional. 
To model this phenomenon, we introduce the notion of \emph{sub-functions}:
Given a path function $f: r_1(\underline{x_0},x_1),r_2(x_1,x_2), \ldots r_n(x_{n-1},x_n)$, the sub-function for an output variable $x_i$ is the function  $f_i(\underline{x_0}, ..., x_i) \leftarrow r_1(\underline{x_0},x_1), \ldots r_i(x_{i-1},x_i)$.

\begin{Example} The sub-functions of the function \textit{get\-Company\-Info} are $f_1(\underline{x},y) \leftarrow \textit{worksAt}(\underline{x},y)$, which is associated to $y$, and  $f_2(\underline{x},y,z) \leftarrow \textit{worksAt}(\underline{x},y), \textit{locatedIn}(y,z)$, which is associated to $z$.
\end{Example}

\noindent We can now express the Optional Edge Semantics:

\begin{Definition}[Optional Edge Semantics]
    \label{as:partial-result}
    We say that we are under the optional edge semantics if, for any path function $f$, a sub-function of $f$ has exactly the same binding for its output variables as $f$.
\end{Definition}

\noindent The optional edge semantics mirrors the way real Web services work. Its main difference to the standard semantics is that it is not possible to use a function to filter out query results. For example, it is not possible to use the function \emph{get\-Company\-Info} to retrieve only those people who work at a company. The function will retrieve companies with addresses and companies without addresses, and we can find out the companies without addresses only by skimming through the results after the call. This contrasts with the standard semantics of parametrised queries (as used, e.g., in~\cite{romero2020equivalent,angie,susie}), which do not return a result if any of their variables cannot be bound. 

This has a very practical consequence: As we shall see, smart plans under the optional edge semantics have a very particular shape.




\ignore{

\nicoleta{1. Ideally, I would like the algorithm to work without the sub-functions. We can have a paragraph about the optional edge semantics and explain that we need to abstract a Web service as a set of functions (I would drop the name sub-functions). Note that in our examples we didn't use it.}

\nicoleta{2. Currently, the input functions are any graphs. I think that we should at least restrict to trees. It makes sense to remove a branch from a tree if it does not have a match. However, it would be wrong to suppose we can define a sub-function by removing an atom from a cycle. The semantics is quite different. }

\subsection{Main result}

\textcolor{red}{TODO}

\begin{Theorem}
Given an atomic query $q(x) \leftarrow r(a, x)$ and a set of path functions, there exists a finite number of minimal smart and weakly smart. Besides, one can write an algorithm to enumerate them.
\end{Theorem}

\subsection{Naive Solutions}

One could think that every plan that recursively calls the same function is not smart. Yet, that is not true. Take again Plan~\eqref{eq:recursive}. Assume the following function:

\begin{align}
getWeird(\underline{c},p, t) \leftarrow & worksFor-(\underline{c}, p1), friend-(p1, p2) \nonumber\\
                                             & friend-(p2, p), jobTitle(p, t)
\end{align}

\noindent This function retrieves all jobTitle of a friend of a friend of a person in a company. If we replace the last call to \emph{getHierarchy} in Plan~\eqref{eq:recursive} by \emph{getWeird(}$\underline{c}, Anna, t)$, we get the plan:

\begin{align}\label{eq:recursive2}
\pi_{Anna}(t) = &getFriends(\underline{Anna},x),  \nonumber\\
 & getFriends(\underline{x},y), \nonumber\\ 
 & getCompany(\underline{y},z), getWeird(\underline{z},Anna, t)
\end{align}

Plan~\eqref{eq:recursive2} becomes smart -- even though it contains a recursion. The example can be made arbitrarily long.
}


\subsection{Preliminary Definitions}
\label{sec:preliminary-definitions}

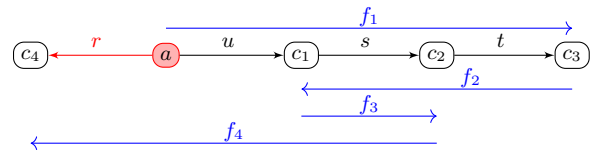
\begin{figure}
    \centering
    \tikzstyle arrowstyle=[blue,semitransparent,scale=2]
\tikzstyle basiclabel=[draw=none,fill=none,shape=rectangle,inner sep=2pt,scale=.8]
\tikzstyle leftlabel=[basiclabel,anchor=east]
\tikzstyle rightlabel=[basiclabel,anchor=west]
\tikzstyle bottomlabel=[basiclabel,anchor=north]
\tikzstyle toplabel=[basiclabel,anchor=south]

\definecolor{lightblue}{cmyk}{0.12,0,0,0.15} 
\tikzstyle{block} = [rectangle, draw, fill=white, 
     text centered, rounded corners]
     
\tikzstyle{headvar} = [rectangle, draw, fill=lightblue!70,
     text centered, rounded corners]    
\tikzstyle{labelcase} = [rectangle,    text centered]
\tikzstyle{background}=[rectangle,   fill=lightblue!70,
                                                inner sep=0.2cm,
                                                rounded corners=5mm]

\tikzstyle{line} = [draw, -latex']
\tikzstyle{linenoarrow} = [draw]
\tikzstyle{invisibleline} = [-latex',sloped]
\tikzstyle{dashedline} = [draw, dashed]
\tikzstyle{inputtar} = [rectangle, draw, fill=lightblue!70, 
     text centered, rounded corners]

\tikzstyle{inputvar} = [rectangle, draw=red,  fill=red!30,
text centered, rounded corners]

\tikzstyle{textnode} = [rectangle, text centered]

\begin{tikzpicture}[scale=0.9, transform shape]


\node [inputvar] (x3) {$a$};
\node [block] (y3) at ($(x3)+(+2cm,0cm)$) {$c_1$};
\node [block] (z3) at ($(y3)+(+2cm,0cm)$) {$c_2$};
\node [block] (t3) at ($(z3)+(+2cm,0cm)$) {$c_3$};
\node [block] (u3) at ($(x3)+(-2cm,0cm)$) {$c_4$};
\path [line] (x3) --node [above,align=center] {{\textit{$u$}} } (y3);
\path [line] (y3) --node [above,align=center] {{\textit{$s$}} } (z3);
\path [line] (z3) --node [above,align=center] {{\textit{$t$}} } (t3);
\path [red, line] (x3) --node [above,align=center] {{\textit{$r$}} } (u3);

\coordinate (xu3) at   ($(x3) + (+0cm,+0.4cm) $); 
\coordinate (tu3) at   ($(t3) + (+0cm,+0.4cm) $); 
\draw [blue,->]     (xu3) to  (tu3);
\node [labelcase,blue] (lf1) at   ($0.5*(xu3) + 0.5*(tu3) + (0,0.17cm)$) {\textit{$f_1$}};

\coordinate (tb3f2) at   ($(t3) + (+0cm,-0.5cm) $); 
\coordinate (yb3f2) at   ($(y3) + (+0cm,-0.5cm) $); 
\draw [blue,->]     (tb3f2) to  (yb3f2);
\node [labelcase,blue] (lf1) at   ($0.5*(tb3f2) + 0.5*(yb3f2) + (0.5cm,+0.17cm)$) {\textit{$f_2$}};

\coordinate (zb3f3) at   ($(z3) + (+0cm,-0.9cm) $); 
\coordinate (yb3f3) at   ($(y3) + (+0cm,-0.9cm) $); 
\draw [blue,->]     (yb3f3) to  (zb3f3);
\node [labelcase,blue] (lf1) at   ($0.5*(zb3f3) + 0.5*(yb3f3) + (0cm,+0.17cm)$) {\textit{$f_3$}};

\coordinate (zb3f4) at   ($(z3) + (+0cm,-1.3cm) $); 
\coordinate (ub3f4) at   ($(u3) + (+0cm,-1.3cm) $); 
\draw [blue,->]     (zb3f4) to  (ub3f4);
\node [labelcase,blue] (lf1) at   ($0.5*(zb3f4) + 0.5*(ub3f4) + (0cm,+0.17cm)$) {\textit{$f_4$}};

\end{tikzpicture}
    \caption{A bounded plan\label{fig:cikm-plan-abstract}}
\end{figure}

Our main intuition is that smart plans under the optional edge semantics walk forward until a turning point. From then on, they ``walk back'' to the input constant and query (see again Figure~\ref{fig:ex1}). As a more complex example, consider the atomic query $q(x) \leftarrow r(a, x)$ and the database shown in Figure~\ref{fig:cikm-plan-abstract}. The plan $f_1, f_2, f_3, f_4$ is shown in blue. 
As we can see, the plan walks ``forward'' and then ``backward'' again. 
Intuitively, the ``forward path'' makes sure that certain facts exist in the database (if the facts do not exist, the plan delivers no answer, and is thus trivially smart). If these facts exist, then all functions on the ``backward path'' are guaranteed to deliver results. Thus, if $a$ has an $r$-relation, the plan is guaranteed to deliver its object.
Let us now make this intuition more formal.

We first observe (and prove in
\ifthenelse{\boolean{final}}{the technical report)}{Property~\ref{prop:unconstraint-is-path-query})}
that the semantics of any filter-free execution plan can be reduced to a path query. The path query of Figure~\ref{fig:cikm-plan-abstract} is:
\begin{align*}
q(a,x) \leftarrow & u(a,y_1), s(y_1, y_2), t(y_2, y_3), t^-(y_3, y_2), s^-(y_2, y_1), \\
   & s(y_1, y_2), s^-(y_2, y_1), u^-(y_1, y_0), r(y_0, x)
\end{align*}
\noindent Now, any filter-free path query can be written unambiguously as the sequence of its relations -- the \emph{skeleton}. In the example, the skeleton is:
\[u.s.t.t^-.s^-.s.s^-.u^-.r\]
\noindent In particular, the skeleton of an atomic query $q(x) \leftarrow r(a, x)$ is just $r$. 
Given a skeleton $r_1.r_2...r_n$, we write $r_1...r_n(a)$ for the set of all answers of the query when $a$ is given as input. For path functions, we write the name of the function as a shorthand for the skeleton of the semantics of the function. For example, in Figure~\ref{fig:cikm-plan-abstract}, we have $f_1(a)=\{c_3 \}$, and $f_1f_2f_3f_4(a)=\{c_4 \}$. We now introduce two notions to formalise the ``forward and backward'' movement:

\begin{Definition}[Forward and Backward Step]
    Given a sequence of relations $r_0...r_n$ and a position $0\leq i\leq n$, a \emph{forward step} consists of the relation $r_i$, together with the updated position $i+1$.
    Given position $1\leq i\leq n+1$, a \emph{backward step} consists of the relation $r_{i-1}^-$, together with the updated position $i-1$. 
\end{Definition}


\begin{Definition}[Walk]
    A \emph{walk} to a position $k$ ($0 \leq k \leq n$) 
    through a sequence of relations $r_0...r_n$ consists of a sequence of steps (forward or backward) in $r_0...r_n$, so that the first step starts at position $n+1$, every step starts at the updated position of the previous step, and the last step leads to the updated position $k$.
\end{Definition}

\noindent If we do not mention $k$, we consider that $k=0$, i.e., we cross the sequence of relations entirely.

\begin{Example}
    In Figure~\ref{fig:cikm-plan-abstract}, a possible walk through $r^-ust$ is $t^-s^-ss^-u^-r$. This walk goes from $c_3$ to $c_2$ to $c_1$, then to $c_2$, and back through $c_1, c, c_4$ (as indicated by the blue arrows).
\end{Example}


\noindent We can now formalise the notion of the forward and backward path:
\begin{Definition}[Bounded plan]\label{def:bounded-plan}
    A bounded path for a set of relations $\mathcal{R}$ and a query $q(x) \leftarrow r(a, x)$ is a path query $P$, followed by a walk through $r^-P$.
    A bounded plan for a set of path functions $\mathcal{F}$ is a non-redundant execution plan whose semantics are a bounded path.
    We call $P$ the \textit{forward path} and the walk though $r^-P$ the \textit{backward path}.
\end{Definition}

\begin{Example}
    In Figure~\ref{fig:cikm-plan-abstract}, $f_1f_2f_3f_4$ is a bounded path, where the forward path is $f_1$, and the backward path $f_2f_3f_4$ is a walk through $r^-f_1$.
\end{Example}

\subsection{Characterising Smart Plans}
\label{sec:charac-smart-plan}

Our notion of bounded plans is based purely on the notion of skeletons, and does not make use of filters. This is not a problem, because smart plans depend on constraint-free plans. Furthermore, we show in
\ifthenelse{\boolean{final}}{the technical report}{Appendix~\ref{sec:restrict-path-queries}}
that we can restrict ourselves to execution plans whose semantics is a path query. 
This allows for the following theorems (proven in
\ifthenelse{\boolean{final}}{the technical report)}{the appendix)}:

\begin{Theorem}[Correctness]\label{thm:bounded-correct}
    Let $q(x) \leftarrow r(a, x)$ be an atomic query, $F$ a set of path functions and $F_{sub}$ the set of sub-functions of $F$. Let $\pi_a$ be a non-redundant bounded execution plan over the $F_{sub}$ such that its semantics is a path query. Then $\pi_a$ is weakly smart.
\end{Theorem}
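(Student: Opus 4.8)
The plan is to argue entirely at the level of the path query to which the semantics of $\pi_a$ reduces (the reduction of filter‑free plans to path queries established earlier, and the restriction to plans whose semantics is a path query, let me assume this freely). Write the forward path as $P = p_1 \ldots p_m$, so that by Definition~\ref{def:bounded-plan} the skeleton of $\pi_a$ is $P$ followed by a walk through $r^- P$. Index the sequence $r^- P$ by $r_0 = r^-, r_1 = p_1, \ldots, r_m = p_m$, and let $v_0, \ldots, v_{m+1}$ be the nodes of the associated path, so that $r_0$ imposes $r(v_1, v_0)$, each $r_i$ imposes $p_i(v_i, v_{i+1})$, and the forward path fixes $v_1 = a$. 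Since the walk starts at position $m+1$ and ends at position $0$, its last step is the backward step from position $1$ to $0$, whose relation is $r_0^- = r$ and whose atom is therefore $r(v_1, v_0)$, with the plan output being $x = v_0$.

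First I would fix an instance $I$ on which $q$ has a result and on which the filter‑free version of $\pi_a$ has a result. From the latter hypothesis I extract a concrete realisation of the forward path only: elements $a = c_0, c_1, \ldots, c_m$ of $I$ with $p_i(c_{i-1}, c_i) \in I$ for every $i$. This is the sole use of the hypothesis (the backward part of the filter‑free result is discarded), and it is where non‑redundancy of $\pi_a$ and the optional‑edge/sub‑function reading enter, guaranteeing that realising just the forward prefix starting at $a$ is legitimate.

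Next, let $b$ be an arbitrary answer of $q$, i.e.\ $r(a, b) \in I$. I build an embedding of the path query of $\pi_a$ by assigning every query term according to its position along $r^- P$: the term at position $0$ is sent to $b$, and the term at each position $i \geq 1$ is sent to $c_{i-1}$ (so position $1$ goes to $a$). Because the semantics of $\pi_a$ is a path query realising a walk through $r^- P$, this per‑position assignment is coherent, and it turns every atom into an edge of the realised forward path or its reverse: a forward step at position $i$ yields $p_i(c_{i-1}, c_i)$; a backward step at position $i \geq 2$ yields $p_{i-1}^-(c_{i-1}, c_{i-2})$, which holds since $p_{i-1}(c_{i-2}, c_{i-1}) \in I$; and the final step yields $r(a, b)$. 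All of these lie in $I$, so the embedding is valid and returns $b$, giving $b \in \pi_a(I)$. As $b$ was an arbitrary answer, $q(I) \subseteq \pi_a(I)$, which is exactly the statement of weak smartness in Definition~\ref{def:strong-smart-plan}.

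I expect the main obstacle to be the bookkeeping that converts the abstract walk through $r^- P$ into a verified embedding: one must track, step by step, which relation ($p_i$, $p_{i-1}^-$, or the boundary $r$) each walk step contributes and check that the constant‑per‑position assignment satisfies all of them simultaneously. A delicate point is the treatment of position $1$: the forward path contributes $a$ there, whereas the walk's return to position $1$ may carry a \emph{fresh} existential term that no filter pins to $a$; assigning it to $a$ is consistent precisely because the edges $r(a,b)$ and $p_1(a,c_1)$ exist in $I$. This same fresh term is also the reason the conclusion is only \emph{weak} smartness and not smartness — when left unconstrained, the final atom can read $r(w, v_0)$ for some $w \neq a$, producing outputs outside $q(I)$, so only containment of $q(I)$ in $\pi_a(I)$, and not equality, can be guaranteed in general.
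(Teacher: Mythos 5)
Your proposal is correct and follows essentially the same route as the paper's proof: fix an instance where the filter-free plan has a result, realise the forward path as constants $a=c_0,c_1,\ldots,c_m$, take an arbitrary answer $b$ with $r(a,b)\in I$, and map each walk position to the corresponding constant so that every atom of the path query becomes an edge of the realised path or the edge $r(a,b)$. The paper organises this position-to-constant correspondence as an explicit induction along the steps of the walk, whereas you state it as a single per-position assignment verified atom by atom, but the underlying argument is identical.
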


\begin{Theorem}[Completeness]\label{thm:bounded-complete}
    Let $q(x) \leftarrow r(a, x)$ be an atomic query, $F$ a set of path functions and $F_{sub}$ the set of sub-functions of $F$. Let $\pi_a$ be a weakly smart plan over $F_{sub}$ such that its semantics is a path query. Then $\pi_a$ is bounded.
\end{Theorem}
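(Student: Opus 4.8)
The plan is to prove the contrapositive: if the semantics of $\pi_a$ is a path query but $\pi_a$ is \emph{not} bounded, then $\pi_a$ is not weakly smart. Two observations make this tractable. First, since the semantics is a connected path query rooted at $a$, every call lies on the $a$-to-$x$ walk, so $\pi_a$ is automatically non-redundant and its output variable $x$ is the \emph{endpoint} of the walk (a call producing anything past $x$ would be redundant). Second, filters are irrelevant for \emph{refuting} weak smartness: the output of $\pi_a$ is always contained in that of its filter-free version, so it suffices to produce one instance $I$ on which the query is non-empty, the filter-free plan has some result, and yet some query answer is already \emph{absent} from the filter-free output. The whole argument therefore takes place at the level of the skeleton $\mathbf{s}=s_1\cdots s_m$, the walk from $a$ whose endpoint is $x$.

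Next I would restate boundedness concretely. Viewing $\mathbf{s}$ as a walk, boundedness asks for a forward prefix $P=s_1\cdots s_k$ that climbs to a maximal depth $M=k+1$, followed by a walk through $r^-P$ that stays within depth $[1,M]$ and ends at depth $0=x$ through the query relation $r$. Equivalently, there is a height $h$ on positions with $h(0)=1$ (the node $a$), strictly increasing to $M$ on the prefix, remaining in $[1,M]$ afterwards, each step carrying the unique relation $r^-P$ places between its two heights, and the final step using $r$ to drop from $1$ to $0$. For a non-bounded $\mathbf{s}$ this fails at a first \emph{defect}: the last relation is not $r$ into the output; or no consistent $h$ exists at all; or after its first descent the walk is forced into a forward excursion that leaves the track of $P$ (it rises above the current maximum, or uses a relation $P$ never placed at that height).

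The engine is a witness instance read off from $\mathbf{s}$. I take the \emph{frozen chain}: fresh constants $w_0=a,w_1,\dots,w_m$, one per position, carrying exactly the fact dictated by each $s_i$ between $w_{i-1}$ and $w_i$, together with a fresh query fact $r(a,x_0)$. The canonical walk shows the filter-free plan is non-empty, and the query contains $x_0$. Since $x_0$ is incident only to $r(a,x_0)$, any $\mathbf{s}$-walk reaching $x_0$ must traverse that edge last; hence $x_0$ appears in the output exactly when $s_m=r$ and the prefix $s_1\cdots s_{m-1}$ admits a walk in the sparse chain that returns to $a$. The key claim is that a non-bounded $\mathbf{s}$ cannot meet this condition: the defect above obstructs every matching walk from folding back onto $a$, so $x_0$ is missed and weak smartness fails.

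The main obstacle is precisely this claim: that the sparse instance admits \emph{no} spurious embedding that accidentally recovers $x_0$. The danger is that the freshly created edges of the backward part, or a relation repeated along the chain, open an unintended path back to $a$. I would control this by (i) refining the instance so that equal relations at genuinely distinct positions are distinguishable---sharpening the frozen chain, or using a tree-shaped instance as is natural for path functions under the optional edge semantics---and (ii) proving a rigidity lemma stating that on such an instance every homomorphism of the walk is forced step by step, so that any return to $a$ already pins down the bounded there-and-back shape. Organising the case analysis around the defect, plausibly by induction on $m$ while peeling off matched forward/backward pairs, is where the real work lies; realizability of the counterexample as a genuine plan over $F_{sub}$ under the optional edge semantics then follows from the same sub-function bookkeeping used for Theorem~\ref{thm:bounded-correct}.
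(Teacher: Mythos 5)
Your witness instance and reduction are exactly the ones the paper uses: the ``frozen chain'' $r_1(a,c_2),\ldots,r_{n+1}(c_{n+1},c_{n+2})$ built from the plan's own skeleton, plus the single extra fact $r(a,c_0)$; weak smartness forces $c_0$ into the plan's output; and since $c_0$ is incident only to that extra fact, some embedding of the skeleton must traverse $r(a,c_0)$ last, i.e.\ the skeleton ends in $r$ and its prefix embeds into the chain starting and ending at $a$. The problem is that the statement you label the ``key claim'' --- such an embedding exists only if the skeleton is bounded --- \emph{is} the theorem; everything before it is bookkeeping. Your proposal stops exactly there: the ``defect'' taxonomy is never defined (and negating boundedness cleanly is genuinely awkward --- your three defect cases are neither clearly exhaustive nor clearly exclusive), the case analysis is never carried out, and the remedies you float (refining the instance so repeated relations are distinguishable, tree-shaped instances, an induction on $m$ peeling matched pairs) remain intentions. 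As written this is a plan for a proof, not a proof, and the gap sits precisely on the theorem's combinatorial content.

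The detours are also unnecessary, and the contrapositive organisation is what makes the remaining step look hard. Your worry about ``spurious embeddings'' is already resolved by the shape of the instance: the chain is a simple path on fresh, pairwise distinct constants, so if $\sigma(x_j)=c_i$, the next atom shares the variable $x_j$ and can only be matched by a fact incident to $c_i$, whence $\sigma(x_{j+1})\in\{c_{i-1},c_{i+1}\}$. That one observation is your entire ``rigidity lemma''; no refinement of the instance is needed. With it, the paper closes the argument \emph{directly} rather than by contradiction: take the embedding $\sigma$ witnessing $c_0$, let $m+1$ be the largest chain index it reaches, declare $r_1\cdots r_m$ the forward path, and induct along the remaining relations $r_{m+1},\ldots,r_{n+1}$ with the invariant that the constant currently occupied by $\sigma$ is the current walk position: at position $i$ the next relation must be $r_i$ (a forward step) or $r_{i-1}^-$ (a backward step, with the convention $r_0=r^-$), and the final relation $r_{n+1}=r$ lands the walk at position $0$. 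This reads the bounded decomposition off the embedding with no defect analysis at all. I recommend abandoning the contrapositive and running this direct induction; note also the paper's small side condition $r_1\neq r$ (handled as a WLOG) ensuring $q(\mathcal{I})=\{c_0\}$, which your sketch does not address.
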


\noindent We have thus found a way to recognise weakly smart plans without executing them. Extending this characterisation from weakly smart plans to fully smart plans consists mainly of adding a filter.
\ifthenelse{\boolean{final}}{The technical report gives more technical details.}{Appendix~\ref{sec:generate-smart-plans} shows how the adaptation is done in practice and Appendix~\ref{sec:characterising-smart-plans} gives more technical details.}

\section{Generating Smart Plans}\label{sec:generation-main}

We have shown that weakly smart plans are precisely the bounded plans. We will now turn to generating such plans. 
Let us first introduce the notion of minimal plans.

\subsection{Minimal Smart Plans}

In line with related work~\cite{romero2020equivalent}, we will not generate redundant plans. These contain more function calls, 
and cannot deliver more results than non-redundant plans. More precisely, we will focus on \textit{minimal plans}:

\begin{Definition}[Minimal Smart Plan]\label{minimal}
Let $\pi_a(x)$ be a non-redundant execution plan organised in a sequence $c_0, c_1,  \ldots, c_k$ of calls, such that the input of $c_0$ is the constant $a$, every other call $c_i$ takes as input an output variable of the previous call $c_{i-1}$, and the output of the plan is in the call $c_k$.
$\pi_a$ is a \emph{minimal (weakly) smart plan} if it is a (weakly) smart plan and there exists no other (weakly) smart plan $\pi'_a(x)$ composed of a sub-sequence $c_{i_1},..., c_{i_n}$ (with $0 \leq i_1 < ... < i_n \leq k$).
\end{Definition}

\begin{Example}
Let us consider the two functions $f_1(x, y) = r(x, y)$ and $f_2(y, z) = r^-(y, t).r(t, z)$. For the query $q(x) \leftarrow r(a, x)$, the plan $\pi_a(x) = f_1(a, y), f_2(y, x)$ is obviously weakly smart. It is also non-redundant. However, it is not minimal. This is because $\pi_a'(x) = f_1(a, x)$ is also weakly smart, and is composed of a sub-sequence of calls of $\pi_a$.
\end{Example}

\noindent In general, it is not useful to consider non-minimal plans because they are just longer but cannot yield more results. On the contrary, a non-minimal plan can have fewer results than its minimal version, because the additional calls can filter out results.
The notion of minimality would make sense also in the case of equivalent rewritings. However, in that case, the notion would impact just the number of function calls and not the results of the plan, since equivalent rewritings deliver the same results by definition. 
In the case of smart plans, as we will see, the notion of minimality allows us to consider only a finite number of execution plans and thus to have an algorithm that terminates.

\subsection{Bounding and Generating the Weakly Smart Plans}
\label{bound_and_generate}

We can enumerate all minimal weakly smart plans because their number is limited. We show in
\ifthenelse{\boolean{final}}{the technical report}{Appendix~\ref{sec-bound}}
the following theorem:

\begin{Theorem}[Bound on Plans]\label{bound-plan-main-text}
    Given a set of relations $\mathcal{R}$, a query $q(x) \leftarrow r(a, x)$, $r \in \mathcal{R}$, and a set of path function definitions $\mathcal{F}$, there can be no more than $M!$ minimal smart plans, where $M={|\mathcal{F}|}^{2k}$ and $k$ is the maximal number of atoms in a function. Besides, there exists an algorithm to enumerate all minimal smart plans.
\end{Theorem}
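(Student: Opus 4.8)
The plan is to combine the skeleton-level characterisation of smart plans (Theorems~\ref{thm:bounded-correct} and~\ref{thm:bounded-complete}) with a pumping argument that forbids a minimal plan from repeating an execution state. First I would recall that every weakly smart plan — and hence, since every smart plan is weakly smart, every smart plan — is \emph{bounded}: its skeleton is a forward path $P$ followed by a walk through $r^-P$ that ends at position $0$ (Definition~\ref{def:bounded-plan}). Thus a plan is determined, up to the choice of filter, by the sequence of function calls realising this forward-then-backward skeleton, and I would carry the whole argument at the level of these call sequences.

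The core is to define, for each prefix of the call sequence, an \emph{execution state} that records exactly the information any future call can depend on: the current position of the walk inside $r^-P$ together with a bounded window of the forward path around that position. Because every function has at most $k$ atoms, the farthest any single future call can reach forward or backward from the current position is $k$ relations, so only a window of at most $2k$ of the most recently installed forward calls is relevant, and each such call is one of the $|\mathcal{F}|$ functions. This is where I would prove the counting heart of the statement: the number of distinct execution states is at most $M = |\mathcal{F}|^{2k}$.

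Then comes the pumping step. Suppose a minimal smart plan $\pi_a = c_0,\dots,c_m$ reaches the same execution state after two distinct calls $c_i$ and $c_j$ with $i<j$. Since the state captures everything the remaining calls depend on, deleting $c_{i+1},\dots,c_j$ yields a shorter call sequence whose skeleton is still a forward path followed by a legal walk ending at position $0$; hence the spliced plan is again bounded, and the filter that makes it smart is preserved because it refers only to the retained output variables. The result is a smart plan that is a strict sub-sequence of $\pi_a$, contradicting minimality (Definition~\ref{minimal}). Therefore all execution states along a minimal smart plan are pairwise distinct, so the plan has length at most $M$ and is an injective sequence over a set of at most $M$ states, of which there are no more than $M!$; this gives the claimed bound. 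The enumeration algorithm then follows immediately: generate all call sequences of length at most $M$, keep those whose skeleton is bounded (a purely syntactic test on the relation sequence) with the appropriate filter added, and discard the non-minimal ones; finiteness of the search space guarantees termination.

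I expect the main obstacle to be the state definition and the excision step taken together: the forward path $P$ can grow without an a priori bound, so the state must be a genuinely bounded summary of it, and one has to verify that splicing out a loop still produces a \emph{well-formed} walk that reaches position $0$ and that smartness — not merely boundedness — survives the cut, i.e.\ that the surviving filter still forces the matched object to be an answer to $q(x)\leftarrow r(a,x)$. Pinning down the $2k$-window precisely, and confirming that it is exactly what is needed both for the size bound $M=|\mathcal{F}|^{2k}$ and for the validity of the splice, is the delicate part; the remaining counting and the termination of the generator are routine once the state machinery is in place.
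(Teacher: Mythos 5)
Your high-level strategy matches the paper's: characterise weakly smart plans as bounded plans, attach a bounded ``state'' to the plan, show a minimal plan cannot repeat a state, and conclude both the $M!$ bound and a terminating enumeration procedure from the finiteness of the state space. The gap is in your state definition, and it sits exactly where you flagged the difficulty. You index states by prefixes of the call sequence and let the state be the current position together with a window of the $2k$ most recently installed forward calls, justified by the fact that a \emph{single} call reaches at most $k$ relations. But the excision step needs the state to determine the validity of \emph{all} remaining calls, not just the next one: after the splice point the surviving calls must still perform a legal walk all the way down to position $0$, and every step of that walk must match the relation sitting at the position it visits — relations that lie arbitrarily far below the current position, outside any window of bounded size. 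Concretely, if the two equal states occur while the forward path is still being built, deleting $c_{i+1},\dots,c_j$ removes a middle segment of the forward path, and the surviving walk, whose relation sequence was fixed to match the original path, no longer matches the shortened path below the window; if both states occur during the walk, the windows can agree while the absolute positions differ, and the same mismatch appears below the windows. So the claim ``the state captures everything the remaining calls depend on'' is false for your state and the pumping step collapses; adding the absolute position would repair the splice but makes the state space unbounded, which is circular.

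The paper resolves this by indexing states by \emph{position in the forward path}, not by time: the state at position $i$ is the set of all positioned functions — forward \emph{and} backward — crossing $i$ (Appendix~\ref{sec-bound}). Equal states at two positions then let one excise the whole segment between them consistently, both from the forward path and from every excursion of the walk into that segment, because every such excursion is recorded in both states. Moreover, the bound of $2k$ on the size of a state is not a reach argument but a consequence of a separate lemma you would also need: in a minimal plan no two calls can end, and no two can start, at the same position (Theorem~\ref{no-duplicate-ends}); from this, at most $2k$ calls cross any position, giving $M=|\mathcal{F}|^{2k}$ states (Theorem~\ref{bound-on-plans}). With that machinery in place, your remaining steps — no repeated state in a minimal plan, hence at most $M!$ plans, hence a terminating generator (the paper uses a guided state-space search, Algorithms~\ref{algo:cikm-algorithm} and~\ref{search}, rather than your brute-force filter) — go through essentially as you wrote them.
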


\noindent This bound is very pessimistic: In practice, the plans are very constrained and thus, the complete exploration is quite fast, as we will show in Section~\ref{sec:cikm-experiments}.

The intuition of the theorem is as follows: Let us consider a bounded path with a forward and a backward path. For each position $i$, we consider a state that represents the functions crossing the position $i$ (we also consider function starting and ending there). We notice that, as the plan is minimal, there cannot be two functions starting at position $i$ (otherwise the calls between these functions would be useless). This fact limits the size of the state to $2k$ functions (where $k$ is the maximal size of a function, the $2$ is due to the existence of both a forward and backward path). Finally, we notice that a state cannot appear at two different positions; otherwise, the plan would not be minimal (all function calls between the repetition are useless). Thus, the algorithm we propose explores the space of states in a finite time, and yields all minimal smart plans. At each step of the search, we explore the adjacent nodes that are consistent with the current state. In practice, these transitions are very constrained, and so the complexity is rarely exponential (as we will see in the experiments).

\subsection{Generating the Weakly Smart Plans}\label{sec-algorithm}

\noindent Theorem~\ref{bound-plan-main-text} allows us to devise an algorithm that enumerates all minimal weakly smart plans. For simplicity, let us first assume that no function definition contains a loop, i.e., no function contains two consecutive relations of the form $rr^-$. This means that a function cannot be both on a forward and backward direction. We will later see how to remove this assumption. Algorithm~\ref{algo:cikm-algorithm} takes as input a query $q$ and a set of function definitions $\mathcal{F}$. It first checks whether the query can be answered trivially by a single function (Line~1). If that is the case, the plan is printed (Line~2). Then, the algorithm sets out to find more complex plans. To avoid exploring states twice, it keeps a history of the explored states in a stack $H$ (Line~3). The algorithm finds all non-trivial functions $f$ that could be used to answer $q$. These are the functions whose short notation ends in $q$ (Line~4). For each of these functions, the algorithm considers all possible functions $f'$ that could start the plan (Line~5). For this, $f'$ has to be \emph{consistent} with $f$, i.e., the functions have to share the same relations. The pair of $f$ and $f'$ constitute the first state of the plan. Our algorithm then starts a depth-first search from that first state (Line~6). For this purpose, it calls the function \emph{search} with the current state, the state history, and the set of functions. In the current state, a marker (a star) designates the forward path function.

\begin{algorithm}\caption{FindMinimalWeakSmartPlans}\label{algo:cikm-algorithm}
    \DontPrintSemicolon
    \KwData{Query $q(a) \leftarrow r(a, x)$, set of path function definitions and all their sub-functions $\mathcal{F}$}
    \KwResult{Prints minimal weakly smart plans}
    \If{$\exists f=r \in \mathcal{F}$}{
        print($f$)
    }
    $H\gets Stack()$\;
    \ForEach{$f=r_1...r_n.r \in \mathcal{F}$}{
        \ForEach{$f' \in \mathcal{F}$ consistent with $r_n^-...r_1^-$}{
            search($\{\langle f,n,backward\rangle, \langle f',1,forward\rangle^*\}$, $H$, $\mathcal{F}$)
        }    
    }    
\end{algorithm}

Algorithm~\ref{search} executes a depth-first search on the space of states. It first checks whether the current state has already been explored (Line~1). If that is the case, the method just returns. Otherwise, the algorithm creates the new state $S'$ (Line~3). For this purpose, it considers all positioned functions in the forward direction (Lines~5-7). If any of these functions ends, the end counter is increased (Line~6). Otherwise, we advance the positioned function by one position. The $(*)$ means that if the positioned function happens to be the designated forward path function, then the advanced  positioned function has to be marked as such, too. We then apply the procedure to the backwards-pointing functions (Lines~8-11).

Once that is done, there are several cases: If all functions ended, we have a plan (Line~12). In that case, we can stop exploring because no minimal plan can include an existing plan. Next, the algorithm considers the case where one function ended, and one function started (Line~13). If the function that ended were the designated forward path function, then we would have to add one more forward function. However, then the plan would contain two functions that start at the current state. Since this is not permitted, we just do not do anything (Line~14), and the execution jumps to Line~29. If the function that ended was some other function, then the ending and the starting function can form part of a valid plan. No other function can start or end at the current state, and hence we just move to the next state (Line~15). Next, the algorithm considers the case where one function starts and no function ends (Line~16). In that case, it has to add another backward function. It tries out all functions (Line~17-19) and checks whether adding the function to the current state is consistent (as in Algorithm~\ref{algo:cikm-algorithm}). If that is the case, the algorithm calls itself recursively with the new state (Line~19). Lines~20-23 do the same for a function that ended. Here again, the $(*)$ means that if $f$ was the designated forward path function, then the new function has to be marked as such. Finally, the algorithm considers the case where no function ended, and no function started (Line~24). In that case, we can just move on to the next state (Line~25). We can also add a pair of a starting function and an ending function. Lines~26-28 try out all possible combinations of a starting function and an ending function and call the method recursively. If none of the previous cases applies, then $end>1$ and $start>1$. This means that the current plan cannot be minimal. In that case, the method pops the current state from the stack (Line~29) and returns.
    
\begin{algorithm}\caption{Search}\label{search}
    \DontPrintSemicolon
    \KwData{A state $S$ with a designated forward path function, a set of states $\mathcal{H}$, a set of path functions $\mathcal{F}$}
    \KwResult{Prints minimal weakly smart plans}
    \textbf{if} $S\in H$ \textbf{then} return\;
    $H.push(S)$\;
    $S'\gets\emptyset$\;
     $end\gets 0$\;
    \ForEach{$\langle r_1...r_n,i,forward\rangle \in S$}{
        \textbf{if} $i+1>n$ \textbf{then} $end++$\;
        \textbf{else} $S'\gets S'\cup\{\langle r_1...r_n,i+1,forward\rangle^{(*)}\}$
    }
    $start\gets 0$\;
    \ForEach{$\langle r_1...r_n,i,backward\rangle \in S$}{
        \textbf{if} $i=1$ \textbf{then} $start++$\;
       \textbf{else} $S'\gets S'\cup\{\langle r_1...r_n,i-1,backward\rangle\}$
    }
    \textbf{if} $S'=\emptyset$ \textbf{then} print($H$)\;
    \ElseIf{$start=1 \wedge end=1$}{
        \textbf{if} the designated function ended \textbf{then} pass\;
        \textbf{else} search($S', H, \mathcal{F}$)\;
    }
    \ElseIf{$start=1 \wedge end=0$}{
        \ForEach{$f \in \mathcal{F}$}{
            $S''\gets S' \cup \{\langle f,|f|,backward\rangle\}$\;
            \textbf{if} $S''$ is consistent \textbf{then} search($S'',H,\mathcal{F}$)
        }
    }
    \ElseIf{$start=0 \wedge end=1$}{
        \ForEach{$f \in \mathcal{F}$}{
            $S''\gets S' \cup \{\langle f,1,forward\rangle^{(*)}\}$\;
            \textbf{if} $S''$ is consistent \textbf{then} search($S'',H,\mathcal{F}$)
        }
    }
    \ElseIf{$start=0 \wedge end=0$}{
        search($S', H, \mathcal{F}$)\;
        \ForEach{$f, f'\in\mathcal{F}$}{
            $S''\gets S' \cup \{\langle f,1,forward\rangle, \langle f',|f'|,backward\rangle\}$\;
            \textbf{if} $S''$ is consistent \textbf{then} search($S'',H,\mathcal{F}$)
        }
    }
    $H.pop()$
\end{algorithm}

\begin{Theorem}[Algorithm]\label{theorem-algorithm}
    Algorithm~\ref{algo:cikm-algorithm} is correct and complete, terminates on all inputs, and runs in time $\mathcal{O}(M!)$, where $M={|\mathcal{F}|}^{2k}$ and $k$ is the maximal number of atoms in a function.
\end{Theorem}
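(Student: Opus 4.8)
The plan is to reduce the statement to the structural results already in hand and then argue that Algorithm~\ref{algo:cikm-algorithm} together with its recursive helper Algorithm~\ref{search} is an \emph{exact enumerator of minimal bounded plans}. By Theorems~\ref{thm:bounded-correct} and~\ref{thm:bounded-complete}, a plan whose semantics is a path query is weakly smart if and only if it is bounded (Definition~\ref{def:bounded-plan}); and by Property~\ref{prop:unconstraint-is-path-query} we may restrict attention to plans with path-query semantics. Hence it suffices to prove three things about the pseudocode: \textbf{(correctness)} every plan it prints is a minimal bounded plan, \textbf{(completeness)} every minimal bounded plan is printed, and \textbf{(termination/complexity)} it halts within the claimed bound. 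Throughout I would reason in terms of the \emph{states} $\langle r_1\ldots r_n, i, \mathit{dir}\rangle$ manipulated by Search, where the starred element records the unique forward-path function of the bounded plan.

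For correctness I would first establish a loop invariant: a state $S$ pushed on the stack encodes exactly the positioned function occurrences that cross a single position of a partially constructed bounded path, together with the identity of the forward-path function. A forward entry records how far its function must still advance along the forward path $P$, a backward entry records the progress of the walk through $r^-P$, and I would check transition by transition (Lines~5--11 of Algorithm~\ref{search}) that advancing a forward entry by $+1$ realises a forward step and decreasing a backward entry by $-1$ realises a backward step, so that one root-to-leaf branch spells out a genuine walk. The \emph{consistency} test guarantees that the atoms contributed by functions meeting at a position agree on their relation, which is precisely the condition under which the plan's semantics collapse to a single path query. Minimality and non-redundancy are enforced by the bookkeeping: the case $start=1\wedge end=1$ with the designated function ending is rejected (Line~14), which forbids two functions from starting at the same position, and the history stack $H$ (Line~1) forbids a state from recurring — exactly the two configurations that Definition~\ref{minimal} excludes, since in either case every call inserted between the offending positions is useless. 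Thus each printed object is a minimal bounded plan and is weakly smart by Theorem~\ref{thm:bounded-correct}.

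For completeness I would take an arbitrary minimal weakly smart plan $\pi_a$, invoke Theorem~\ref{thm:bounded-complete} to see it is bounded, and read its skeleton as a forward path $P$ followed by a walk through $r^-P$. Scanning this walk position by position yields a sequence of states $S_0,S_1,\ldots$ I would argue that $S_0$ is one of the initial states enumerated in Lines~4--5 of Algorithm~\ref{algo:cikm-algorithm} — the function ending in $r$ is the backward function reaching the query relation, and the consistent $f'$ is the starting forward function — and that each successor $S_{j+1}$ is reachable from $S_j$ via one of the four cases of Search. The crux is that minimality forces at most one function to start and at most one to end at any position, so the forbidden double-start case never arises and the transitions taken by $\pi_a$ lie entirely among Lines~13--28; the DFS therefore traverses the whole state sequence and prints the plan when the final state empties (Line~12).

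Termination and the running time follow from the finiteness behind Theorem~\ref{bound-plan-main-text}: a state holds at most $2k$ positioned functions drawn from $\mathcal{F}$, so there are at most $M=|\mathcal{F}|^{2k}$ distinct states; the stack $H$ prevents any state from being expanded twice on a branch, so each branch has length at most $M$, the search tree has at most $M!$ leaves, and each node performs only polynomial work (the consistency checks and a constant number of cases), yielding $\mathcal{O}(M!)$. The loop-free assumption of Section~\ref{sec-algorithm} is removed by splitting a function containing an $r\,r^-$ pattern across the forward and backward directions, which multiplies the admissible states by only a constant factor and leaves the bound intact. I expect the main obstacle to be the correctness invariant itself: proving rigorously that the abstract walk-through-$r^-P$ view is in lockstep with the concrete state transitions, and in particular that the starred forward-marker bookkeeping and the $start$/$end$ case analysis capture minimality and non-redundancy \emph{exactly} — neither rejecting a legitimate minimal plan nor admitting a non-minimal one. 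Once this correspondence is pinned down, both the correctness and the completeness directions fall out of it.
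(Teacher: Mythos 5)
Your proposal takes essentially the same route as the paper's own proof: soundness by arguing that the chained, starred forward-marker functions form a forward path while all other positioned functions execute a walk, so every printed plan is bounded and hence weakly smart by Theorem~\ref{thm:bounded-correct}; completeness by combining Theorem~\ref{thm:bounded-complete} with the fact (Theorem~\ref{no-duplicate-ends}) that a minimal plan has at most one function starting and at most one ending at any position, so the algorithm's case analysis covers every minimal plan's state sequence; and termination plus the $\mathcal{O}(M!)$ bound from the limit of $M={|\mathcal{F}|}^{2k}$ distinct states enforced through the history stack. The one caveat is that the ``exactness'' issue you flag as the main obstacle---showing that printed plans are genuinely \emph{minimal}, not merely bounded---is also left unargued in the paper, whose correctness direction likewise only establishes that every printed plan is bounded (hence weakly smart).
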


\noindent The worst-case runtime of $\mathcal{O}(M!)$ is unlikely to appear in practice. Indeed, the number of possible functions that we can append to the current state in Lines 19, 23, 28 is severely reduced by the constraint that they must coincide on their relations with the functions that are already in the state. In practice, very few functions have this property. Furthermore, we can significantly improve the bound if we are interested in finding only a single weakly smart plan:

\begin{Theorem}\label{find-single}
Given an atomic query and a set of path function definitions $\mathcal{F}$, we can find a single weakly smart plan in $\mathcal{O}({|\mathcal{F}|}^{2k})$, where $k$ is the maximal number of atoms in a function.
\end{Theorem}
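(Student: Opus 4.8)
The plan is to reuse the state-space machinery behind Theorem~\ref{bound-plan-main-text} and Theorem~\ref{theorem-algorithm}, but to replace the exhaustive enumeration of plans by a single reachability search. Recall that by Theorems~\ref{thm:bounded-correct} and~\ref{thm:bounded-complete} a plan whose semantics is a path query is weakly smart if and only if it is bounded, and that the proof of Theorem~\ref{bound-plan-main-text} encodes bounded plans as paths in a graph of \emph{states}: each state records the at most $2k$ positioned functions crossing a given position together with the forward-path marker, and the transitions out of a state $S$ (advancing every positioned function, and optionally starting/ending functions, subject to consistency) are exactly those computed by Algorithm~\ref{search}. The number of distinct states is at most $M=|\mathcal{F}|^{2k}$. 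Minimal weakly smart plans correspond to \emph{simple} paths from an initial state (a pair $\{\langle f,n,backward\rangle,\langle f',1,forward\rangle^{*}\}$ of Algorithm~\ref{algo:cikm-algorithm}) to a \emph{terminal} state, i.e.\ one in which every function has ended ($S'=\emptyset$).

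First I would observe that a weakly smart plan exists if and only if some minimal one exists: starting from any weakly smart plan, repeatedly deleting calls while preserving weak smartness terminates at a minimal weakly smart sub-sequence in the sense of Definition~\ref{minimal}. By the correspondence above this is equivalent to a terminal state being reachable from an initial state in the state graph. Crucially, the transition relation of Algorithm~\ref{search} depends only on the current state $S$ and on $\mathcal{F}$, not on the history $H$: the set $S'$, the counters $start$ and $end$, and the branching are all functions of $S$ alone. Hence ``$S$ can reach a terminal state'' is a well-defined property of $S$, independent of the path by which $S$ was reached, and this history-independence is exactly what licenses memoisation.

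Next I would modify the search accordingly: instead of using $H$ as a DFS path-stack that is popped on backtracking — which is what forces Algorithm~\ref{search} to enumerate every simple path and hence costs up to $\mathcal{O}(M!)$ — I would use $H$ as a \emph{global} visited set that is never emptied, abort as soon as the first terminal state is produced, and reconstruct the plan from the search path that reached it. This is textbook graph reachability: starting from the initial states of Algorithm~\ref{algo:cikm-algorithm}, each state is expanded at most once, and every outgoing transition of an expanded state is examined once. Correctness and completeness then follow from ingredients already in hand: any path returned is, by Theorem~\ref{thm:bounded-correct}, a bounded and hence weakly smart plan; and by Theorem~\ref{thm:bounded-complete} together with the history-independence above, if a weakly smart plan exists then a terminal state is reachable and the search necessarily reaches one.

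For the running time, the search visits each of the at most $M=|\mathcal{F}|^{2k}$ states once; advancing the at most $2k$ positioned functions of a state and testing consistency of the candidate extensions costs only a polynomial in $k$ and $|\mathcal{F}|$, which is dominated by the number of states and yields the claimed $\mathcal{O}(|\mathcal{F}|^{2k})$ bound. The main obstacle is the justification that the global visited set never causes us to miss a plan: this is precisely the history-independence of the transitions, and I would spell it out by checking that nothing in the state-update of Algorithm~\ref{search} — including the forward-path marker and the ``pass'' case triggered when the designated function ends — ever reads from $H$. Once that is established, replacing path-backtracking by global visiting is the routine passage from path-enumeration to reachability, and collapses the $M!$ factor to $M$.
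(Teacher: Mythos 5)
Your proposal is correct and is essentially the paper's own proof: the paper likewise converts $H$ into a global visited set (removing the pop in Line~29 of Algorithm~\ref{search}) so that no state is explored twice, stops at the first plan found, and invokes the $|\mathcal{F}|^{2k}$ bound on the number of states from Theorem~\ref{bound-on-plans}. Your explicit justification via history-independence of the transition relation is a slightly more careful spelling-out of what the paper compresses into ``no minimal weak smart plan contains the same state twice,'' but it is the same argument.
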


\paragraph{Functions with loops} If there is a function that contains a loop of the form $r.r^-$, then Algorithm~\ref{search} has to be adapted as follows: First, when neither functions are starting nor ending (Lines~24-28), we can also add a function that contains a loop. Let $f = r_1...r_i r_i^-... r_n$ be such a function. Then the first part $r_1 ... r_i$ becomes the backward path, and the second part $r_i^- ... r_n$ becomes the forward path in Line~27.

When a function ends (Lines~20-23), we could also add a function with a loop. Let $f = r_1...r_i r_i^- r_n$ be such a function. The first part $r_1 ... r_i$ will create a forward state $\langle r_1 ... r_i,1,\textit{forward}\rangle$. The second part, $r_i^- ... r_n$ will create the backward state $\langle r_i^- ... r_n,|r_1 ... r_i|,\textit{backward}\rangle$. The consistency check has to be adapted accordingly. The case when a function starts (Lines~16-19) is handled analogously. Theorems~\ref{theorem-algorithm} and~\ref{find-single} remain valid, because the overall number of states is still bounded as before.

\section{Experiments}\label{sec:cikm-experiments}

We have implemented the Susie Algorithm~\cite{susie} (more details in
\ifthenelse{\boolean{final}}{the technical report}{Appendix~\ref{Susie}}), the equivalent rewriting approach~\cite{romero2020equivalent}, as well as our method (Section~\ref{bound_and_generate}) in Python. The code is available on Github\footnote{\href{https://github.com/Aunsiels/smart_plans}{https://github.com/Aunsiels/smart\_plans}}. We conduct two series of experiments -- one on synthetic data, and one on real Web services. All our experiments are run on a laptop with Linux, 1 CPU with four cores at 2.5GHz, and 16 GB RAM.

\subsection{Synthetic Functions}

In our first set of experiments, we use the methodology introduced by~\cite{romero2020equivalent} to simulate random functions. We consider a set of artificial relations $\mathcal{R}=\{r_1,...,r_n\}$, and randomly generated path functions up to length 3, where all variables are existential except the last one. Then we try to find a smart plan for each query of the form $q(x) \leftarrow r(a, x), r\in\mathcal{R}$. 

In our first experiment, we limit the number of functions to 30 and vary the number $n$ of relations. All the algorithms run in less than 2 seconds in each setting for each query. Figure~\ref{fig:answerednrelations} shows which percentage of the queries the algorithms answer. As expected, when increasing the number of relations, the percentage of answered queries decreases, as it becomes harder to combine functions. The difference between the curve for weakly smart plans and the curve for smart plans shows that it was not always possible to filter the results to get exactly the answer of the query. Weakly smart plans can answer more queries but at the expense of delivering only a super-set of the query answers. In general, we observe that our approach can always answer strictly more queries than Susie and the equivalent rewriting approach. 

\begin{figure}
\centering
\begin{subfigure}{.5\textwidth}
  \centering
  \includegraphics[width=\columnwidth]{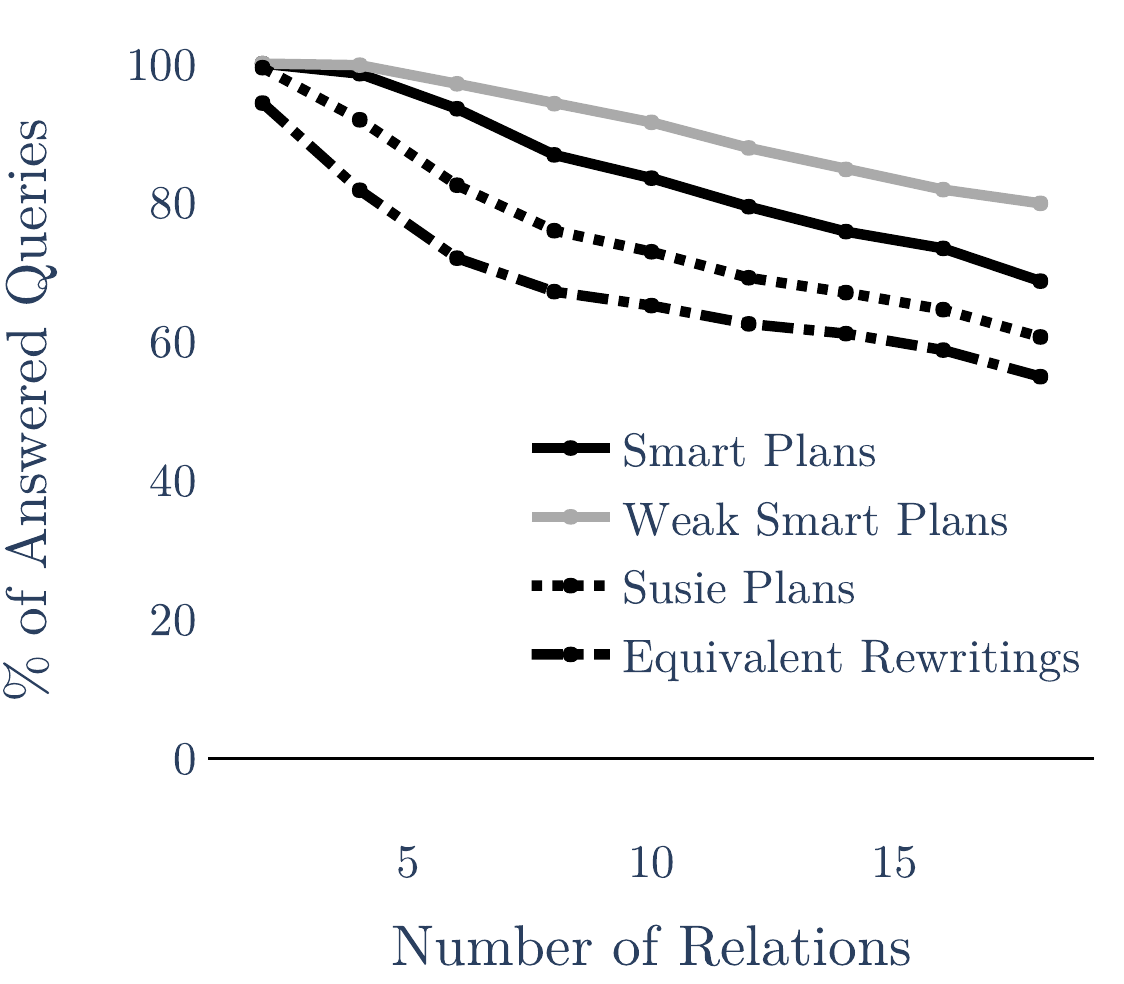}
  \caption{}\label{fig:answerednrelations}
\end{subfigure}%
\begin{subfigure}{.5\textwidth}
  \centering
  \includegraphics[width=\columnwidth]{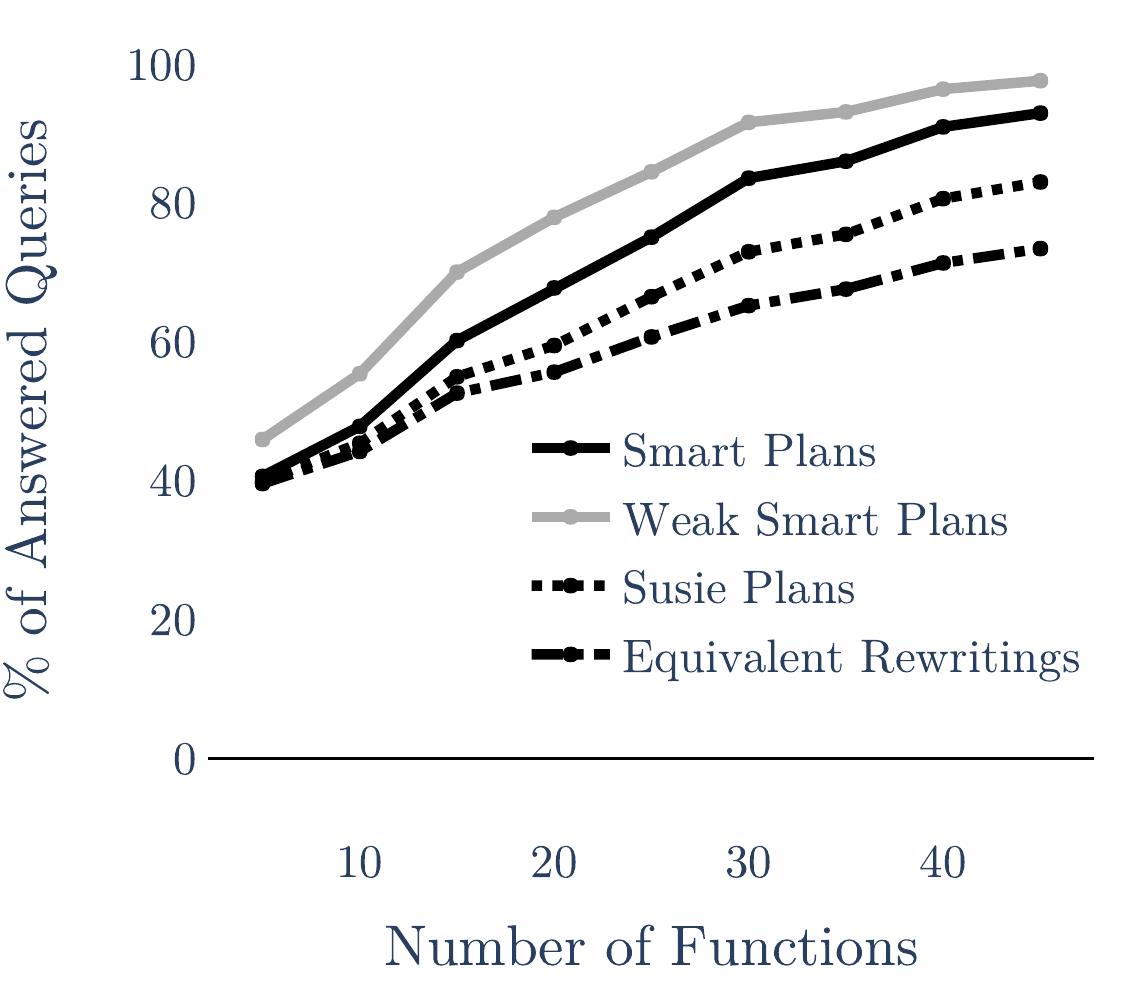}
  \caption{}\label{fig:answerednfunctions}
\end{subfigure}
\caption{Percentage of answered queries}
\end{figure}

In our next experiment, we fix the number of relations to 10 and vary the number of functions. Figure \ref{fig:answerednfunctions} shows the results. As we increase the number of functions, we increase the number of possible function combinations. Therefore, the percentage of answered queries increases for all approaches. As before, our algorithm outperforms the other methods by a wide margin. The reason is that Susie cannot find all smart plans (see \ifthenelse{\boolean{final}}{the technical report}{Appendix~\ref{Susie}} for more details). Equivalent rewritings, on the other hand, can find only those plans that are equivalent to the query on all databases -- which are very few in the absence of constraints.

\subsection{Real-World Web Services}
In our second series of experiments, we apply the methods to real-world Web services. We use the functions provided by~\cite{romero2020equivalent,susie}. These are the functions of the Web services of \href{http://abebooks.com}{Abe Books}, \href{http://isbndb.com/}{ISBNDB}, \href{http://www.librarything.com/}{LibraryThing}, \href{http://musicbrainz.org/}{MusicBrainz}, and \href{https://www.themoviedb.org}{MovieDB}. Besides, as these Web services do not contain many existential variables, we added the set of functions based on information extraction techniques (IE) from~\cite{susie}. 

Table~\ref{table:completenessTable} shows the number of functions and the number of relations for each Web service. Table~\ref{table:examplesFunctions} gives examples of functions. Some of them are recursive. For example, MusicBrainz allows querying for the albums that are related to a given album. All functions are given in the same schema. Hence, in an additional setting, we consider the union of all functions from all Web services. 

Note that our goal is not to call the functions. Instead, our goal is to determine whether a smart plan exists -- before any functions have to be called.

\begin{table}[t]
\centering
\begin{tabular}{l}
  \toprule  
$\textit{getDeathDate}(\underline{x},y,z) \leftarrow \textit{hasId}^{-}(x,y) \wedge \textit{diedOnDate}(y,z)$ \\
  
$\textit{getSinger}(\underline{x},y,z,t) \leftarrow \textit{hasRelease}^{-}(x,y) \wedge \textit{released}^{-}(y,z) \wedge \textit{hasId}(z,t)$ \\
  
$\textit{getLanguage}(\underline{x},y,z,t) \leftarrow \textit{hasId}(x,y)\wedge \textit{released}(y,z) \wedge \textit{language}(z,t)$ \\
  
 $\textit{getTitles}(\underline{x},y,z,t) \leftarrow \textit{hasId}^{-}(x,y) \wedge \textit{wrote}^-(y,z) \wedge title(z,t)$ \\
  
 $\textit{getPublicationDate}(\underline{x},y,z) \leftarrow \textit{hasIsbn}^{-}(x,y) \wedge \textit{publishedOnDate}(y,z)$ \\
  \bottomrule
\end{tabular}
\caption{Examples of real functions (3 of MusicBrainz, 1 of ISBNdb, 1 of LibraryThing).}\label{table:examplesFunctions}
\end{table}

For each Web service, we considered all queries of the form $q(x) \leftarrow r(a,x)$ and $q(x) \leftarrow r^-(a,x)$, where $r$ is a relation used in the function definitions of that Web service. We ran the Susie algorithm, the equivalent rewriting algorithm, and our algorithm for each of these queries. The run-time is always less than 2 seconds for each query. Table~\ref{table:completenessTable} shows the ratio of queries for which we could find smart plans. We first observe that our approach can always answer at least as many queries as the other approaches can answer. Furthermore, there are cases where our approach can answer strictly more queries than Susie.

\begin{table}[t]
\centering
\begin{tabular}{l|cc|r|r|r|r}
  \toprule
  Web Service & Functions & Relations & Susie & Eq. Rewritings & Smart Plans\\
  \midrule
  MusicBrainz (+IE) & 23 & 42 & 48\% (32\%) & 48\% (32\%) & 48\%  (35\%)\\
  LastFM (+IE) & 17 & 30 & 50\% (30\%) & 50\% (30\%) & 50\% (32\%) \\
  LibraryThing (+IE) & 19 & 32 & 44\% (27\%) & 44\% (27\%) & 44\% (35\%) \\
  Abe Books (+IE) & 9 & 8 & 75\% (14\%) & 63\% (11\%) & 75\% (14\%) \\
  ISBNdb (+IE) & 14 & 20 & 65\% (23\%) & 50\% (18\%) & 65\% (23\%) \\
  Movie DB (+IE) & 12 & 18 & 56\% (19\%) & 56\% (19\%) & 56\% (19\%)\\
  UNION with IE & 74 & 82 & 52\% & 50\% & 54\%\\
  \bottomrule 
\end{tabular}
\caption{Percentage of queries with smart plans (numbers in parenthesis represent the results with IE).}\label{table:completenessTable}
\end{table}

\noindent \textbf{The advantage of our algorithm is not that it beats Susie by some percentage points on some Web services. Instead, the crucial advantage of our algorithm is the guarantee that the results are complete.} If our algorithm does not find a plan for a given query, it means that there cannot exist a smart plan for that query. Thus, even if Susie and our algorithm can answer the same number of queries on AbeBooks, only our algorithm can guarantee that the other queries cannot be answered at all. Thus, only our algorithm gives a complete description of the possible queries of a Web service.

Rather short execution plans can answer some queries. Table~\ref{eg:cikm-examples} shows a few examples. However, a substantial percentage of queries cannot be answered at all. In MusicBrainz, for example, it is not possible to answer \emph{produced}$(a,x)$ (i.e., to know which albums a producer produced), \emph{hasChild$^-$(a,x)} (to know the parents of a person), and \emph{marriedOnDate$^-(a,x)$} (to know who got married on a given day). 
These observations show that the Web services maintain control over the data, and do not allow exhaustive requests.

\begin{table}[t]
    \centering
    \begin{tabular}{lll}
        \toprule
        Query & Plan\\
        \midrule
        \emph{hasTrackNumber} & \emph{getReleaseInfoByTitle, getReleaseInfoById}\\  
        \emph{hasIdCollaborator} & \emph{getArtistInfoByName, getCollaboratorIdbyId,getCollaboratorsById}\\  
        \emph{publishedByTitle} & \emph{getBookInfoByTitle, getBookInfoById}\\
        \bottomrule
    \end{tabular}
    \caption{Example Plans (2 of MusicBrainz, 1 of ABEBooks).}\label{eg:cikm-examples}
\end{table}

\balance

\section{Conclusion}

In this paper, we have introduced the concept of smart execution plans for Web service functions. These are plans that are guaranteed to deliver the answers to the query if they deliver results at all. We have formalised the notion of smart plans, and we have given a correct and complete algorithm to compute smart plans. Our experiments have demonstrated that our approach can be applied to real-world Web services. All experimental data, as well as all code, is available at the URL given in Section~\ref{sec:cikm-experiments}. We hope that our work can help Web service providers to design their functions, and users to query the services more efficiently.\\

\bibliographystyle{splncs04}
\bibliography{acmart.bib}

\ifthenelse{\boolean{final}}{}{
\clearpage

\appendix
\section{Susie}\label{Susie}

Our first approach to generate bounded plans is inspired by Susie~\cite{susie}. We call the resulting plans \emph{Susie plans}.

\begin{Definition}[Susie Plan]
A Susie plan for a query $q(x) \leftarrow r(a, x)$ is a plan whose semantics is of the form $F.F^-.r$, where $F$ is a path query, $F^-$ is the reverse of $F$ (with all relations inverted), $F^-.r$ is generated by a single function call and the last atom is $r(a, x)$.
\end{Definition}

\noindent Example~\ref{ex1} (shown in Figure~\ref{fig:ex1}) is a simple Susie plan. We have:

\begin{Theorem}[Correctness]\label{Susie-correct}
    Every Susie plan is a smart plan.
\end{Theorem}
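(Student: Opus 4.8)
The plan is to argue purely at the level of the plan's \emph{semantics}, since smartness is defined in terms of it: the internal decomposition of a Susie plan into function calls (in particular the clause that $F^-.r$ comes from a single call) is irrelevant to the conclusion. So first I would fix notation. Write the forward path as $F = s_1.s_2\cdots s_k$, so that the filter-free semantics of the Susie plan is the path query $s_1(y_0,y_1),\,s_2(y_1,y_2),\ldots,s_k(y_{k-1},y_k),\,s_k^-(y_k,y_{k+1}),\ldots,s_1^-(y_{2k-1},y_{2k}),\,r(y_{2k},x)$ with $y_0=a$, and the (filtered) Susie plan adds the filter $y_{2k}=a$, which is exactly what turns the last atom into $r(a,x)$ as in the definition. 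The goal is then to show two inclusions on every instance $I$: that the filtered plan's answers are always contained in the query answers $\{x^\ast : r(a,x^\ast)\in I\}$ (soundness), and that, \emph{whenever the filter-free plan has a result}, the query answers are contained in the filtered plan's answers (completeness). Together these give ``delivers exactly the answers,'' which is Definition~\ref{def:zsmart-plan}.

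Soundness is immediate and needs no hypothesis: any result of the filtered plan comes from an embedding $\tau$ with $\tau(y_{2k})=a$ and $r(\tau(y_{2k}),\tau(x))\in I$, i.e.\ $r(a,\tau(x))\in I$, so $\tau(x)$ is a query answer. The real content is completeness. Here I would use the hypothesis that the filter-free plan has a result on $I$: restricting any such witnessing embedding $\sigma$ to the first $k$ atoms shows that the forward path $F$ embeds from $a$, i.e.\ there are values $\sigma(y_0)=a,\sigma(y_1),\ldots,\sigma(y_k)$ with $s_j(\sigma(y_{j-1}),\sigma(y_j))\in I$ for all $j$. Now take an arbitrary query answer $x^\ast$ (so $r(a,x^\ast)\in I$) and build an embedding $\tau$ of the filtered semantics that \emph{retraces the forward path in reverse}: set $\tau(y_j)=\sigma(y_j)$ for $0\le j\le k$ and $\tau(y_{k+m})=\sigma(y_{k-m})$ for $1\le m\le k$, and $\tau(x)=x^\ast$. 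The key point is that each backward atom $s_{k+1-m}^-(y_{k+m-1},y_{k+m})$ then demands $s_{k+1-m}^-(\sigma(y_{k-m+1}),\sigma(y_{k-m}))\in I$, which is the \emph{same fact} as the forward atom $s_{k+1-m}(\sigma(y_{k-m}),\sigma(y_{k-m+1}))\in I$ already guaranteed by $\sigma$. Finally $\tau(y_{2k})=\sigma(y_0)=a$ satisfies the filter, and $r(\tau(y_{2k}),\tau(x))=r(a,x^\ast)\in I$ satisfies the last atom, so $x^\ast$ is delivered.

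I expect the crux to be exactly this mirroring argument: the fact that the existence of a one-way embedding of $F$ from $a$ automatically yields a round-trip embedding of $F.F^-$ that returns to $a$, because $F^-$ is literally $F$ read backwards with every relation inverted, so the return trip can reuse the forward facts. Once this ``retracing'' observation is isolated, both the satisfaction of the backward atoms and the satisfaction of the filter $y_{2k}=a$ follow mechanically, and the only care needed is bookkeeping of the indices $j=k+1-m$ relating the $m$-th backward atom to the corresponding forward atom. Combining the two inclusions shows that under the stated hypothesis the filtered Susie plan returns precisely $\{x^\ast : r(a,x^\ast)\in I\}$, which is the definition of a smart plan, completing the proof.
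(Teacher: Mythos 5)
Your proof is correct, but it takes a genuinely different route from the paper's. The paper derives the theorem from its general machinery: it observes that $F^-$ is a walk through $F$, hence $F^-.r$ is a walk through $r^-.F$, so the constraint-free version of a Susie plan is a bounded plan (Definition~\ref{def:bounded-plan}); Theorem~\ref{thm:bounded-correct} then yields weak smartness, and the fact that the semantics ends in $r(a,x)$ upgrades this to smartness. You instead prove everything from scratch at the level of the semantics: your soundness inclusion is the paper's terse final step (``it ends with $r(a,x)$, so it is smart'') made explicit, and your ``retracing'' embedding $\tau(y_{k+m})=\sigma(y_{k-m})$ is in effect a hand-specialization of the induction in the paper's proof of Theorem~\ref{thm:bounded-correct} to the particular walk $F^-$, which takes only backward steps and therefore never needs the general forward/backward case analysis; your index bookkeeping (the $m$-th backward atom $s_{k+1-m}^-$ reusing the $(k{+}1{-}m)$-th forward fact) checks out, as does the verification of the filter $\tau(y_{2k})=\sigma(y_0)=a$. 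What the paper's route buys is reuse and brevity: the same Theorem~\ref{thm:bounded-correct} also covers plans whose backward part is an arbitrary walk (such as $f_1f_2f_3f_4$ in Figure~\ref{fig:cikm-plan-abstract}), and the Susie result drops out as a two-line corollary. What your route buys is self-containedness and precision: you state and prove both inclusions demanded by Definition~\ref{def:zsmart-plan} rather than appealing to a characterization of weakly smart plans, which makes the role of the single filter $y_{2k}=a$ completely transparent.
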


\begin{Theorem}[Limited Completeness]\label{Susie-complete}
Given a set of path functions so that (1) no function contains existential variables and (2) no function contains a loop (i.e., two consecutive relations of the form $r.r^-$), the Susie plans include all minimal filtering smart plans.
\end{Theorem}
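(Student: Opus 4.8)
The plan is to start from the structural characterisation already established and then use the two hypotheses together with minimality to collapse a general bounded plan into the mirror shape required of a Susie plan. First I would observe that a filtering smart plan is in particular weakly smart, so by the Completeness theorem (Theorem~\ref{thm:bounded-complete}) its filter-free semantics is a bounded path: a forward path $P$, which by Definition~\ref{def:bounded-plan} is a genuine simple path query, followed by a walk $W$ through $r^-P$. Writing the plan as the chain of calls $c_0,\dots,c_k$ of Definition~\ref{minimal}, each call contributes a contiguous chunk of the skeleton, glued to its neighbours at the node that is simultaneously an output of one call and the input of the next. Hypothesis (1) forbids existential variables, so every node traversed by every call is an output variable; this is exactly what lets me (i) reuse any intermediate node as the input of another call when building shortcut sub-plans, and (ii) place the equality filter on the node where the walk is meant to return to $a$, which is what upgrades a bounded plan from weakly smart to smart.

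The heart of the argument concerns the last call $c_k$, which contains the output $x$ and hence ends with the query edge $r$ reaching $x$. Hypothesis (2) forbids a loop $s.s^-$ inside a single function, so $c_k$ cannot turn around within itself: its skeleton must be a straight backward descent from its input node $w$ down to $a$, followed by $r$, i.e. $c_k$ has skeleton $p_j^-\cdots p_1^-\,r$, where $w$ sits at position $j$ of the forward spine and smartness forces the node just before $r$ to equal $a$. I would then exhibit the shortcut sub-plan consisting of the prefix of $c_0,\dots,c_{k-1}$ that produces $w$ from $a$ along $P$, followed by $c_k$ itself. This is a valid plan (its calls still chain, since $w$ is an output of a retained call), its skeleton is the bounded path $p_1\cdots p_j\,p_j^-\cdots p_1^-\,r$, so it is weakly smart by the Correctness theorem (Theorem~\ref{thm:bounded-correct}) and smart once the same filter is added. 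By minimality (Definition~\ref{minimal}) the original plan must coincide with this sub-plan; in particular $w$ is the turning point $v_m$, there are no calls strictly between the forward ascent and $c_k$, and no zig-zag survives in $W$.

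Consequently the whole backward walk is the single call $c_k$ with skeleton $P^-\,r$, while $c_0,\dots,c_{k-1}$ realise exactly the forward path query $P$. Setting $F=P$, the plan has semantics $F.F^-.r$ with $F^-.r$ produced by one function call and last atom $r(a,x)$, which is precisely a Susie plan. The degenerate case $k=0$, in which a single function equal to $r$ answers the query, is the empty-$F$ Susie plan and is subsumed by the same reasoning.

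The main obstacle I anticipate is making the minimality step geometrically airtight: I must argue that the retained prefix really traces the straight forward path $P$ that $c_k$ mirrors (rather than a zig-zag of its own), and that deleting the discarded calls leaves a genuinely valid, still-smart plan. This is where both hypotheses are indispensable — hypothesis (2) keeps every individual call monotone, forcing the turning point onto a function boundary so that $c_k$ is a clean straight descent, and hypothesis (1) guarantees that the intermediate node $w$ is materialised as an output (so it can serve as an input and as the filter target) and that re-traversing an already-established edge yields no fresh existential witness, so each removed call is truly redundant rather than result-restricting. Carefully re-certifying the shortcut as bounded via Theorem~\ref{thm:bounded-correct}, and checking that the original filter transfers verbatim to it, is the step I would write out in most detail.
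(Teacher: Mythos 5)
Your overall strategy is the same as the paper's: invoke Theorem~\ref{thm:bounded-complete} to obtain the bounded-path shape, use hypothesis (2) to force the last call to be a monotone (loop-free) descent ending in $r$, use hypothesis (1) to materialise the turning point $w$ as an output so that a shortcut sub-plan (forward prefix up to $w$, then the last call) is a valid plan, and conclude by minimality that the original plan \emph{is} this shortcut, i.e.\ has the Susie shape $F.F^-.r$ with $F^-.r$ produced by one call. For plans whose semantics ends with the atom $r(a,x)$, this is essentially the paper's proof.

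The genuine gap is that you silently exclude the other admissible ending. A minimal filtering smart plan need not end with $r(a,x)$: by Definition~\ref{def:wellfiltering} and Property~\ref{prop:minimal-filtering-is-path-query}, its semantics may instead end with $r(y,x).r^-(x,a)$, i.e.\ the filter sits \emph{after} the output variable. Your sentence ``the last call contains the output $x$ and hence ends with the query edge $r$ reaching $x$'' is not an implication; it is exactly where this case gets assumed away. Moreover, for such a plan your very first step does not apply verbatim: its filter-free semantics ends one step past the query answer (the last atom $r^-$ is a forward step from position $0$ back to position $1$), so it is not literally a bounded path, and Theorem~\ref{thm:bounded-complete} cannot be invoked directly. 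The paper devotes its first paragraph to precisely this point: under hypothesis (2), the atoms $r(y,x)$ and $r^-(x,a)$ cannot come from the same call (that would be a loop $r.r^-$), so the last call would have to be the single-atom function $r^-$; removing it and placing the filter on $y$ in $r(y,x)$ yields a smart plan on a proper subsequence of calls, contradicting minimality. Once this short case analysis is added at the start, the rest of your argument goes through and coincides with the paper's proof.
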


\noindent Minimal filtering plans are introduced in \cite{romero2020equivalent} and we recall the definition in Appendix~\ref{sec:characterising-smart-plans}. Intuitively, Theorem~\ref{Susie-complete} means that Susie plans are sufficient for all cases where the functions do not contain existential variables. If there are existential variables, then there can be minimal smart plans that Susie will not find.
An example is shown in Figure~\ref{fig:cikm-plan-ex1}, if we assume that the middle variable of \emph{get\-Professional\-Address} existential.

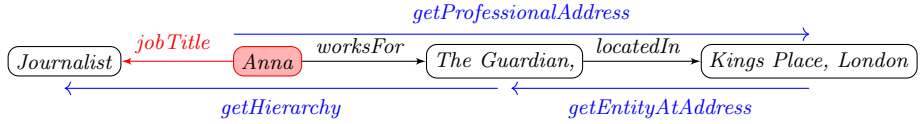
\begin{figure*}[t]
    \centering
    \tikzstyle arrowstyle=[blue,semitransparent,scale=2]
\tikzstyle basiclabel=[draw=none,fill=none,shape=rectangle,inner sep=2pt,scale=.8]
\tikzstyle leftlabel=[basiclabel,anchor=east]
\tikzstyle rightlabel=[basiclabel,anchor=west]
\tikzstyle bottomlabel=[basiclabel,anchor=north]
\tikzstyle toplabel=[basiclabel,anchor=south]

\definecolor{lightblue}{cmyk}{0.12,0,0,0.15} 
\tikzstyle{block} = [rectangle, draw, fill=white, 
     text centered, rounded corners]
     
\tikzstyle{headvar} = [rectangle, draw, fill=lightblue!70,
     text centered, rounded corners]    
\tikzstyle{labelcase} = [rectangle,    text centered]
\tikzstyle{background}=[rectangle,   fill=lightblue!70,
                                                inner sep=0.2cm,
                                                rounded corners=5mm]

\tikzstyle{line} = [draw, -latex']
\tikzstyle{linenoarrow} = [draw]
\tikzstyle{invisibleline} = [-latex',sloped]
\tikzstyle{dashedline} = [draw, dashed]
\tikzstyle{inputtar} = [rectangle, draw, fill=lightblue!70, 
     text centered, rounded corners]

\tikzstyle{inputvar} = [rectangle, draw=red,  fill=red!30,
text centered, rounded corners]

\begin{tikzpicture}[scale=0.9, transform shape]






\node [inputvar, node distance=2.5cm] (z2) {\textit{Anna}};
\node [block] (y2) at ($(z2)+(+3.5cm,0cm)$) {\textit{The Guardian,}};
\node [block] (t2) at ($(y2)+(+4.5cm,0cm)$) {\textit{Kings Place, London}};
\node [block] (x2) at ($(z2)+(-3cm,0cm)$) {\textit{Journalist}};
\path [line] (z2) --node [above,align=center] {{\textit{worksFor}} } (y2);
\path [line,red] (z2) --node [above,align=center] {{\textit{jobTitle}} } (x2);
\path [line] (y2) --node [above,align=center] {{\textit{locatedIn}} } (t2);

\coordinate (zu2) at   ($(z2) + (-0.5cm,+0.4cm) $); 
\coordinate (tu2) at   ($(t2) + (+0cm,+0.4cm) $); 
\draw [blue,->]     (zu2) to  (tu2);
\node [labelcase,blue] (lf1) at   ($0.5*(zu2) + 0.5*(tu2) + (0,0.3cm)$) {\textit{getProfessionalAddress}};

\coordinate (xb2) at   ($(x2) + (+0cm,-0.4cm) $);
\coordinate (yb2) at   ($(y2) +  (-0.1cm,-0.4cm) $);
\draw [->,blue]    (yb2) to (xb2);
\node [labelcase,blue] (lf2) at   ($0.5*(xb2) + 0.5*(yb2) + (0cm,-0.3cm)$) {\textit{getHierarchy}};

\coordinate (tb2) at   ($(t2) + (+0cm,-0.4cm) $); 
\coordinate (yb2) at   ($(y2) + (+0.1cm,-0.4cm) $); 
\draw [blue,->]     (tb2) to  (yb2);
\node [labelcase,blue] (lf1) at   ($0.5*(tb2) + 0.5*(yb2) + (0,-0.3cm)$) {\textit{getEntityAtAddress}};

\end{tikzpicture}
    \caption{A smart plan for the query \textit{job\-Title}(\textit{Anna}, $?x$), which Susie will not find.\label{fig:cikm-plan-ex1}}
\end{figure*}

The Susie plans for a query $q$ on a set of functions $\mathcal{F}$ can be generated very easily: It suffices to consider each function $f=r_1....r_n$ that ends in $r_n=q$, and to consider all path function $f_1...f_m$ that have the semantics $r^-_{n-1}...r^-_1$. This algorithm runs in $O(|\mathcal{F}|^{k+1})$, where $k$ is the maximal length of a function, and is typically small.

\section{Bounding the Weakly Smart Plans}\label{sec-bound}

We would like to generate not just the Susie plans, but all minimal smart plans. This is possible only if their number is finite. This is indeed the case, as we shall see next: Consider a weakly smart plan for a query $q$. Theorem~\ref{thm:bounded-complete} tells us that the consequences of the plan are of the form $r_0...r_k...r_n q$, where $r_0...r_k$ is the forward path, and $r_{k+1}...r_n$ is a walk in the forward path. Take some function call $f=r_i...r_j$ of the plan. If $j\leq k$, we say that $f$ \emph{starts} at position $i$, that it \emph{ends} at position $j$, and that it \emph{crosses} positions $i$ to $j$. For example, in Figure~\ref{fig:cikm-plan-abstract}, the forward path is $F=rst$. The function $f_1$ starts at position 0, ends at position 3, and crosses the positions 0 to 3. If $i\geq k$, then $r_i$ was generated by a step in the backward path. We say that $f$ starts at the position before that step, and ends at the updated position after the step that produced $r_j$. In Figure~\ref{fig:cikm-plan-abstract}, $f_2$ starts at position 3, ends at position 1, and crosses 3 to 1; $f_3$ starts at position 1, ends at position 2, and crosses 1 to 2; $f_4$ starts at 2, ends at -1, and crosses 2 to -1. Our main insight is the following:

\begin{Theorem}[No Duplicate Ends]\label{no-duplicate-ends}
    Given a set of relations $\mathcal{R}$, a query $q(x) \leftarrow r(a, x), r \in \mathcal{R}$, and a set of path function definitions $\mathcal{F}$, let $\pi$ be a minimal weakly smart plan for $q$. There can be no two function calls in $\pi$ that end at the same position, and there can be no two function calls that start at the same position.
\end{Theorem}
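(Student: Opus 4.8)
The plan is to prove the two claims (no duplicate ends, no duplicate starts) by a minimality argument: if either failed, I could excise a contiguous block of function calls and obtain a strictly shorter plan that is still weakly smart, contradicting the minimality of $\pi$. By Theorem~\ref{thm:bounded-complete}, a minimal weakly smart plan is bounded, so its semantics is a skeleton $r_0 \ldots r_k \ldots r_n q$ consisting of a forward path $r_0 \ldots r_k$ followed by a walk through $r^- r_0 \ldots r_k$. Every function call $f = r_i \ldots r_j$ occupies a contiguous interval of positions and is either entirely on the forward path or is produced by backward steps, so each call has a well-defined start position and end position as described in the text.

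First I would handle the \emph{no duplicate ends} case. Suppose two distinct calls $f$ and $g$ both end at the same position $p$. By examining how the plan is chained, a call ending at position $p$ produces an output variable that corresponds to the constant sitting at position $p$ in the bounded path; this variable is exactly what the subsequent call consumes as its input. If both $f$ and $g$ end at $p$, they both make available the same value (the database element at position $p$), so whichever of them appears earlier can be deleted together with all the calls strictly between the two, re-wiring the successor of $g$ to take its input from $f$ (or vice versa). The key point is that deleting this block does not change the set of constants that the surviving suffix of the plan can reach, because the reconnection point carries the identical value. The resulting plan $\pi'$ is a subsequence of the calls of $\pi$, is still a valid (non-redundant) execution plan with a bounded semantics, and hence by Theorem~\ref{thm:bounded-correct} is weakly smart. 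This contradicts minimality of $\pi$ in the sense of Definition~\ref{minimal}.

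The \emph{no duplicate starts} case is symmetric: if two calls both start at the same position $p$, they both consume the same input value (the element at position $p$), so the calls lying between the two starts, together with one of the two starting calls, can be removed while keeping the endpoints consistent, again yielding a shorter weakly smart subsequence and contradicting minimality. I would phrase both arguments uniformly by showing that a repeated start or a repeated end forces the existence of a cyclic or redundant segment that walks out from position $p$ and returns to a position carrying the same value, so that segment contributes nothing that the shorter plan cannot already reach.

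The main obstacle I anticipate is making the ``delete the intervening block and re-wire'' operation precise while guaranteeing that the result is genuinely a subsequence of $\pi$'s calls (as required by Definition~\ref{minimal}) and that its semantics remains a bounded path rather than some arbitrary skeleton. In particular I must verify that after excision the forward-path/backward-path structure is preserved — the surviving forward portion still forms a path query $P$ and the surviving backward portion is still a legitimate walk through $r^- P$ — so that Theorem~\ref{thm:bounded-correct} applies and certifies weak smartness. A secondary subtlety is the possibility that one of the two coinciding calls is the designated forward-path call itself; I would check that the excision can always be arranged to retain a well-defined forward path (if necessary by choosing which of the two coinciding calls to keep), so that the shortened plan is still bounded and the contradiction with minimality goes through.
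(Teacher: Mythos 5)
Your proposal is correct and takes essentially the same route as the paper: the paper writes $\pi = f_1\ldots f_n\, g_1\ldots g_m\, h_1\ldots h_k$ with $f_n$ and $g_m$ ending (resp.\ $g_1$ and $h_1$ starting) at the same position, excises $g_1\ldots g_m$, and observes that $f_1\ldots f_n\, h_1\ldots h_k$ is still a weakly smart plan, contradicting minimality. The only slip is in your primary phrasing of the ends case --- one must keep the earlier-ending call and delete everything up to and including the later-ending one (re-wiring its successor), rather than delete the earlier call, since re-wiring is only valid at the coinciding end position; your ``or vice versa'' and your closing caveat about choosing which coinciding call to keep already cover this.
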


\noindent This theorem is easy to understand: If two functions were ending or starting at the same position, the plan would not be minimal as we could remove the function calls between them. To turn this theorem into a bound on the number of possible plans, we need some terminology first. Let us consider the positions on the forward path one by one. Each position is crossed by several functions. Let us call the triple of a function $f=r_1...r_n$, a position $i \in[1,n]$, and a direction (forward or backward) a \emph{positioned function}. For example, in Figure~\ref{fig:cikm-plan-abstract}, we can say that position 2 in the forward path is crossed by the positioned function $\langle f_1,3,\textit{forward}\rangle$. Let us call the set of positioned functions at a given position in the forward path a \emph{state}. For example, in Figure~\ref{fig:cikm-plan-abstract} at position 2, we have the state $\{\langle f_1,3,\textit{forward}\rangle, \langle f_2,2,\textit{backward}\rangle, \langle f_3,2,\textit{forward}\rangle, \langle f_4,1,$ $\textit{backward}\rangle\}$. 

We first observe that a state cannot contain the same positioned function more than once. If it contained the same positioned function twice, then the plan would not be minimal. Furthermore, Theorem~\ref{no-duplicate-ends} tell us that there can be no two functions that both end or both start in a state. Finally, a plan cannot contain the same state twice, because otherwise, it would not be minimal. This leads to the following bound on the number of plans:

\begin{Theorem}[Bound on Plans]\label{bound-on-plans}
    Given a set of relations $\mathcal{R}$, a query $q(x) \leftarrow r(a, x), r \in \mathcal{R}$, and a set of path function definitions $\mathcal{F}$, there can be no more than $M!$ minimal weakly smart plans, where $M={|\mathcal{F}|}^{2k}$ and $k$ is the maximal number of atoms in a function.
\end{Theorem}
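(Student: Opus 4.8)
The plan is to encode each minimal weakly smart plan by the finite sequence of \emph{states} it induces along its forward path, and then to bound separately (i) how many distinct states can occur and (ii) how long and how varied such a sequence can be. By Theorem~\ref{thm:bounded-complete} a weakly smart plan is bounded, so its semantics is a skeleton $r_0\ldots r_n$ consisting of a forward path followed by a walk; every function call of the plan therefore \emph{crosses} a contiguous block of positions of the forward path, and at each forward-path position $p$ the state $S_p$ is the set of positioned functions crossing $p$. The two facts I would lean on are already available: a state never contains the same positioned function twice, and by Theorem~\ref{no-duplicate-ends} no two calls start at the same position and no two calls end at the same position; moreover minimality forbids the same state from recurring at two distinct positions.

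The key step, and the one I expect to be the main obstacle, is to show that every state has at most $2k$ elements, where $k$ is the maximal number of atoms in a function. First I would split a state into its forward-crossing and its backward-crossing positioned functions. A call that crosses position $p$ occupies an interval of positions whose length equals its number of atoms, hence at most $k$; since the interval contains $p$, its starting position lies in a window of at most $k$ consecutive positions ending at $p$. By the no-duplicate-starts half of Theorem~\ref{no-duplicate-ends}, these starting positions are pairwise distinct, so at most $k$ forward-crossing calls can cross $p$; the symmetric argument using ends bounds the backward-crossing calls by $k$ as well. Hence $|S_p|\le 2k$. I would have to be careful here about the precise bookkeeping of where a call ``starts'' and ``ends'' on the forward path when it belongs to the backward walk, matching the conventions fixed just before Theorem~\ref{no-duplicate-ends}, since a sloppy accounting there is exactly what could inflate the window beyond width $k$.

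Given the size bound, I would count distinct states: a state of size at most $2k$ can be recorded as a length-$2k$ tuple of slots, each holding a function of $\mathcal{F}$ or a blank, so there are at most $M=|\mathcal{F}|^{2k}$ of them, absorbing the blank symbol into the loose bound. Finally I would observe that the sequence $S_0,S_1,\ldots,S_n$ of states along the forward path determines the plan, because from the states one can read off which call starts and which call ends at each position and thus reconstruct every function call. Since no state repeats, this sequence is injective and drawn from a universe of at most $M$ states, so its length is at most $M$ and the number of such orderings, hence of minimal weakly smart plans, is at most $M!$. The only quantitative looseness to flag is that the exact number of injective sequences of all lengths is a small constant times $M!$; collapsing it to the stated $M!$ is the kind of over-counting the statement already tolerates, so I would simply bound the length by $M$ and the number of orderings by $M!$.
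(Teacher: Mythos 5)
Your proposal is correct and follows essentially the same route as the paper's proof: bound the size of each state by $2k$ via Theorem~\ref{no-duplicate-ends}, deduce that at most $M={|\mathcal{F}|}^{2k}$ states exist, invoke minimality to forbid repeated states, and conclude the $M!$ bound. You actually supply details the paper leaves implicit (the window argument for the $2k$ bound and the reconstruction of the plan from its state sequence), and the constant-factor looseness you flag is present in the paper's own counting as well.
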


\noindent This bound is very pessimistic: In practice, the succession of states in very constrained and thus, the complete exploration is quite fast, as we showed in Section~\ref{sec:cikm-experiments}.

\section{We can turn non-minimal plans into minimal ones}

The following property confirms the intuition that if we have a non-minimal execution plan, then we can turn it into a minimal one:

\begin{Property}
Let $q$ be an atomic query and $\pi_a$ a non-redundant weakly smart plan. Then, either $\pi_a$ is minimal, or one can extract a minimal weakly smart plan from $\pi_a$.
\end{Property}

\noindent This means that even if we can generate only the minimal weakly smart plans (and not all weakly smart plans), we will be able to generate a plan if a weakly smart plan exists at all.

\section{Generating Smart Plans}
\label{sec:generate-smart-plans}

To generate smart plans instead of weakly smart plans, we will adapt Algorithm~\ref{algo:cikm-algorithm}. The intuition is simple: if the query is $q(x) \leftarrow r(a,x)$, and if we found a weakly smart plan that contains the atom $r(y,x)$, then we have to add a filter $y=a$ to make the plan smart. For this, we have to make sure that $y$ is an output variable.
In Appendix~\ref{sec:characterising-smart-plans}, we show why this is sufficient to make a weakly smart plan smart.
Besides, 
we show that we can restrict ourselves to smart plans whose semantics are path queries ending in $r(y, x)$ or $r^-(x, y)$. We now give more details about how we adapt the generation of weakly smart plans to generate also smart plans.

We begin by considering the case where the semantics of the plan is a path query ending in $r(y, x), y=a$. 
We have to modify Algorithm~\ref{algo:cikm-algorithm} in such a way that the last two variables of the sub-function that starts the search must be output variables. If the algorithm gives no result, we can conclude that no smart plan exists (because there exists no weakly smart plan). If the algorithm gives a result, then we know that we can add the filter $y=a$. 

Now let us consider the case where the semantics of the plan is a path query ending with $r^-(x, y), y=a$. In this case, the restriction to sub-functions will remove the atom $r^-(x, a)$. Therefore, we have to adapt the initialisation of the algorithm in Section~\ref{sec-algorithm}. We have two cases to consider:
\begin{itemize}
    \item The last function call contains only one atom: $r^-(x, a)$. This happens when there is a sub-function with one atom $r^-$ with one input and one output variable. In this case, we just have to find a weakly smart plan as in Section~\ref{sec-algorithm} and add this function at the end.
    \item The last call contains more than one atom. In this case, we have to look at all sub-functions ending in $r(z, y).r^-(y, x)$ where $y$ and $x$ are output variables. We then continue the initialisation as if the function was ending in $r(z, y)$, ignoring the last atom $r^-(y, x)$. 
\end{itemize}
\noindent If this generation algorithm gives no result, we know there is no smart plan. Otherwise, we apply the filter on the last atom of the semantics to get $r^-(x, y), y=a$. This is possible because we made sure that $y$ is an output variable.

We give the complete new algorithm in Algorithm~\ref{algo:cikm-algorithm-smart}.

\begin{algorithm}\caption{FindSmartPlans}\label{algo:cikm-algorithm-smart}
    \DontPrintSemicolon
    \KwData{Query $q(a) \leftarrow r(a, x)$, set of path function definitions and all their sub-functions $\mathcal{F}$}
    \KwResult{Prints minimal smart plans}
    \If{$\exists f(x, y)=r(x, y) \in \mathcal{F}$}{
        print($f$)
    }
    $H\gets Stack()$\;
    \ForEach{$f(x, y, z)=r_1...r_n(x, y).r(y, z) \in \mathcal{F}$}{
        \ForEach{$f' \in \mathcal{F}$ consistent with $r_n^-...r_1^-$}{
            \ForEach{Weak Smart Plan in search($\{\langle f,n,backward\rangle, \langle f',1,forward\rangle^*\}$, $H$, $\mathcal{F}$)}{
                Add a filter to create $r(a, x)$
            }
        }
    }
    \ForEach{$f(x, y, z)=r_1...r_n.r(x, y).r^-(y, z) \in \mathcal{F}$}{
        \ForEach{$f' \in \mathcal{F}$ consistent with $r_n^-...r_1^-$}{
            \ForEach{Weak Smart Plan in search($\{\langle f,n,backward\rangle, \langle f',1,forward\rangle^*\}$, $H$, $\mathcal{F}$)}{
                Add a filter to create $r^-(x, a)$
            }
        }    
    }
    \ForEach{$f(x, y)=r^-(x, y) \in \mathcal{F}$}{
        FindMinimalWeakSmartPlans($q$, $\mathcal{F}$) + $f(x, a)$
    }
\end{algorithm}

\section{Proof of the Properties And Theorems}

\subsection{Why we can restrict to path queries}
\label{sec:restrict-path-queries}

Under the optional edge semantics, we can query for all partial results. Now if we have all sub-functions of a given function $f$, then $f$ itself is no longer necessary (except if it is itself a sub-function). We call this substitution by sub-functions the sub-function transformation:

\begin{Definition}[Sub-Function Transformation]
    \label{def:sub-function-transformation}
    Let $\mathcal{F}$ be a set of path functions and $\mathcal{F}_{sub}$ the set of sub-functions of $\mathcal{F}$. Let $\pi_a$ be a non-redundant execution plan over $\mathcal{F}$. Then, we define the \textit{Sub-Function Transformation} of $\pi_a$, written $\mathcal{P}(\pi_a)$, as the non-redundant execution plan over the sub-functions $\mathcal{F}_{sub}$ as follows:
    \begin{itemize}
        \item The output is the same than $\pi_a$
        \item Each function call $c$ in $\pi_a$ associated to a path function $f$ is replaced by the smallest sub-function of $f$ which contains the output variables which are either the output of the plan, used by other calls or involved in filters.
    \end{itemize}
\end{Definition}

\noindent This transformation has the property to conserve the smartness.

\begin{Property}
    \label{prop:sub-function-transformation-safe}
    Let $q(x) \leftarrow r(a, x)$ be an atomic query and $\mathcal{F}$ be a set of path functions. Let $\pi_a$ be a non-redundant execution plan composed of the sequence of calls $c_1,..., c_n$. Then, under the optional edge semantics, $\pi_a$ is smart (resp. weakly smart) iff its sub-function transformation $\mathcal{P}(\pi_a)$ is smart (resp. weakly smart).
\end{Property}

The property tells us that under the \textit{optional edge semantics}, we can replace all path functions by all of their sub-functions.

Finally, in the case of constraint-free plans, we have that the sub-function transformation creates a path query, which will be easier to manipulate.

\begin{Property}
    \label{prop:unconstraint-is-path-query}
    Let $\pi_a$ be a constraint-free non-redundant execution plan. Then the semantics of the sub-function transformation $\mathcal{P}(\pi_a)$ is a path query where the last variable of the last atom is the output atom.
\end{Property}

We can also deduce from this property that once we have transformed a constraint-free execution plan (exploited in weak smart plans) to use only sub-functions, we can write the semantics of the plan unambiguously as a skeleton. In particular, we could consider that each sub-function has only one output.

In the end, we can see that it is safe to consider only execution plans whose semantics is a path query in the case of constraint-free plans. We shall see that it is also the case for minimal-filtering non-redundant execution plans.

\subsection{Proof of Property~\ref{prop:sub-function-transformation-safe}}

This property derives directly from the optional edge semantics: The outputs of the sub-function are precisely the same than the outputs of the full function for the same variables. So, we can apply all the filters and have the same effect.

\subsection{Proof of Property~\ref{prop:unconstraint-is-path-query}}

This property follows directly the fact that in a constraint-free plan, we require only one output per function. Therefore, function calls can be chained, and the last variable of the last atom is the output of the plan (we require nothing else after that).

\subsection{Proof of Theorem~\ref{thm:bounded-correct}}

As $\pi_a$ is a constraint-free non-redundant execution plan over the $F_{sub}$, its sub-function transformation can be written as a path query (see Property~\ref{prop:unconstraint-is-path-query}). We now consider we have this form.

Let $\pi_a$ be a bounded plan for a query $q(x) \leftarrow r(a, x)$.
Assume a database $\mathcal{I}$ such that $q(\mathcal{I}) \neq \emptyset$ and $\pi_a(\mathcal{I}) \neq \emptyset$ (such a database exists).
Choose any constant $c_0\in q(\mathcal{I})$. We have to show that $c_0 \in \pi_a(\mathcal{I})$. Since $\pi_a$ is bounded, its consequences can be split into a forward path $F=r_1...r_m$ and a following backward path $B=r'_1...r'_n$ with $r'_n=r$ (by definition of a walk).
Since $\pi_a(\mathcal{I}) \neq \emptyset$ it follows that $F(a) \neq \emptyset$. Hence, $\mathcal{I}$ must contain $r_1(a,c_2)...r_m(c_m,c_{m+1})$ for some constants $c_1,...,c_{m+1}$.
Since $r(a, c_0) \in \mathcal{I}$, the database $\mathcal{I}$ must contain $r^-(c_0,a) r_1(a,c_2)...r_m(c_m,c_{m+1})$. 
$B=r'_1...r'_n$ is a walk through $r^-F$. Let us prove by induction that $c_i \in Fr_1'...r_j'(a)$ if $r_j'$ was generated by a step that leads to position $i$. To simplify the proof, we call $c_1 = a$.

The walk starts at position $i=m+1$ with a backward step, leading to position $i=m$. Hence, $r'_1=r_m^-$. Thus, we have $c_{m}\in Fr'_1(a)$.
Now assume that we have arrived at some position $i\in[1,m]$, and that $c_i\in Fr_1'...r_j'(a)$, where $r_j'$ was generated by the previous step. If the next step is a forward step, then $r_{j+1}=r_i$, and the updated position is $i+1$. Hence, $c_{i+1}\in Fr_1'...r_{j+1}'(a)$. If the next step is a backward step, then $r_{j+1}=r_{i-1}^-$, and the updated position is $i-1$. Hence, $c_{i-1}\in Fr_1'...r_{j+1}'(a)$. It follows then that $c_0\in FB(a)$ as the walk ends at position 0.

\subsection{Proof of Theorem~\ref{thm:bounded-complete}}
\label{sec:proof-weak-smart-complete}

As $\pi_a$ is a constraint-free non-redundant execution plan over the $F_{sub}$, its sub-function transformation can be written as a path query (see Property~\ref{prop:unconstraint-is-path-query}). We now consider we have this form.

Let $\pi_a$ be a weak smart plan for a query $q(x) \leftarrow r(a, x)$, with consequences $r_1(x_1,x_2)...$ $r_n(x_n,x_{n+1}),$  $r_{n+1}(x_{n+1},x_{n+2})$. Without loss of generality, we suppose that $r_1 \neq r$ (the proof can be adapted to work also in this case). Consider the database $\mathcal{I}$
    \[\mathcal{I}=\{r^-(c_0,a), r_1(a,c_2), ..., r_n(c_n,c_{n+1}), r_{n+1}(c_{n+1},c_{n+2})\}\]
Here, the $c_i$ are fresh constants. For convenience, we might write $a = c_1$. On this database, $\pi_a(\mathcal{I})\supseteq \{c_{n+2}\}\neq\emptyset$ and $q(\mathcal{I})=\{c_0\}$.
Since $\pi_a$ is weakly smart, we must have $c_0\in \pi_a(\mathcal{I})$.
Let $\sigma$ be a binding for the variables of $\pi_a$ that produces this result $c_0$, i.e., $\sigma(x_1)=\sigma(x_{n+1})=a$ (as $a$ is the only entity linked to $c_0$).
We notice that we must have $r_{n + 1} = r$ as it is the only relation which leads to $c_0$. Let us define \[m=(\textit{max}~~\{ m'~|~\exists l: \sigma(x_l)=c_{m'}\})-1\]

Let us call $r^-r_1...r_m$ the \emph{forward path}, and $r_{m+1}...r_nr_{n+1}$ the \emph{backward path} (with $r_{n+1}=r$). We have to show that the backward path is a walk in the forward path. 

Let us show by induction that $r_{m+1}...r_j$ can be generated by $j-m$ steps (with $j\in[m+1,n+1])$, so that the updated position after the last step is $i$, and $\sigma(x_{j+1})=c_i$. We first note that, due to the path shape of $\mathcal{I}$, $\sigma(x_{j})=c_i$  for any $i,j$ always implies that $\sigma(x_{j+1})\in\{c_{i-1},c_{i+1}\}$.
Let us start with $j=m+1$ and position $i=m+1$. Since $\sigma(x_{j+1})$ cannot be $c_{m+1}$ (by definition of $m$), we must have $\sigma(x_{j+1})=c_{m}$.
Since $\sigma(x_j)=c_{m+1}$ and $\sigma(x_{j+1})=c_{m}$, we must have $r_{m+1}=r_{m}^-$. Thus, $r_{m+1}$ was generated by a backward step, and we call $i-1$ the updated position.
Now let us assume that we have generated $r_{m+1}...r_j$ (with $j\in[m+2,n+1]$) by some forward and backward steps that have led to a position $i\in[0,m]$, and let us assume that $\sigma(x_{j+1})=c_i$. Consider the case where $\sigma(x_{j+2})=c_{i+1}$. Then we have $r_{j+1}=r_{i}$. Thus, we made a forward step, and the updated position is $i+1$. Now consider the case where $\sigma(x_{j+2})=c_{i-1}$. Then we have $r_{j+1}=r_{i-1}^-$. Thus, we made a backward step, and the updated position is $i-1$.
Since we have $r_{n+1}=r$ and $\sigma(x_{n+2}) = c_0$, the walk must end at position $0$. The claim follows when we reach $j=n+1$.

\subsection{Characterising Smart Plans}
\label{sec:characterising-smart-plans}

When we want to show that a plan is smart, we must first show that its constraint-free version is weakly smart. Thus, the results of Section~\ref{sec:charac-smart-plan} can be applied to determine the shape of a smart plan. What remains to find are the constrains (or filters) that we must apply to the execution plan in order to make it smart.

As in Romero et al.~\cite{romero2020equivalent}, we are going to exploit the notion of well-filtering plan (Definition~\ref{def:wellfiltering} and minimal well-filtering plan (Definition~\ref{def:minimal-filtering-plan}).

\begin{Definition}[Well-Filtering Plan]
\label{def:wellfiltering}
Let $q(x) \leftarrow r(a, x)$ be an atomic query. An execution plan $\pi_a(x)$ is said to be \emph{well-filtering for $q(x)$} if all filters of the plan are on the constant $a$ used as input to the first call and the semantics of $\pi_a$ contains at least an atom $r(a, x)$ or $r^-(x, a)$, where $x$ is the output variable.
\end{Definition}

\begin{Definition}[Minimal Filtering Plan]
\label{def:minimal-filtering-plan}
Given a well-filtering plan $\pi_a(x)$ for an atomic query $q(a,x) \leftarrow r(a,x)$, let the \textit{minimal filtering plan associated to $\pi_a(x)$} be the plan $\pi'_a(x)$ that results from removing all filters from $\pi_a(x)$ and doing the following:
\begin{itemize}
    \item We take the greatest possible call $c_i$ of the plan, and the greatest possible output variable $x_j$ of call~$c_i$, such that adding a filter on~$a$ to variable $x_j$ of call~$c_i$ yields a well-filtering plan, and define $\pi_a'(x)$ in this way.
    \item If this fails, i.e., there is no possible choice of $c_i$ and $x_j$, then we leave $\pi_a(x)$ as-is, i.e., $\pi'_a(x) = \pi_a(x)$.
    \end{itemize}
\end{Definition}

We have a similar result than Romero et al.~\cite{romero2020equivalent}:

\begin{Lemma}\label{lem:smart-equivalent_r_atom}
   Given an atomic query $q(a,x) \leftarrow r(a,x)$ and a set of path functions $\mathcal{F}$, any smart plan constructed from sub-functions of $\mathcal{F}$ must be well-filtering.
\end{Lemma}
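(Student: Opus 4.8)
The plan is to reason entirely through \emph{frozen (canonical) databases}, exploiting the two halves of smartness (Definition~\ref{def:zsmart-plan}): a soundness half (whenever the filter-free plan is active, every output is a genuine $q$-answer) and a completeness half (whenever active, every $q$-answer is output). By Property~\ref{prop:unconstraint-is-path-query} I would first assume the filter-free semantics of $\pi_a$ is a path query $r_1(a,x_1),\dots,r_N(x_{N-1},x)$, so the output variable $x$ is the last variable of the skeleton and occurs in no other body atom; write the filters as $\gamma_1=\delta_1,\dots,\gamma_m=\delta_m$.

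First I would establish the atom condition of Definition~\ref{def:wellfiltering} from the soundness half. Let $\theta$ be the substitution that applies all filter equalities to the body, and let $\hat J$ be the canonical database obtained by freezing $\theta(B_1),\dots,\theta(B_N)$ (each body variable to a fresh distinct constant, the input constant $a$ to itself). The freezing map embeds the full (filtered) semantics into $\hat J$, so $\pi_a$ is active on $\hat J$ and returns the frozen head $h=\widehat{\theta(x)}$; a fortiori the filter-free plan is active. Smartness then forces $\pi_a(\hat J)=q(\hat J)$, hence $h\in q(\hat J)$, i.e. $r(a,h)\in\hat J$. Because freezing is injective on terms, the only way a frozen atom can equal $r(a,h)$ is that some $\theta(B_i)$ equals $r(a,\theta(x))$; that is, after the filters are applied the body contains the atom $r(a,x)$ (equivalently $r^-(x,a)$) with $x$ the output. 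This is exactly the second requirement of well-filtering, and it mirrors the characterisation of maximally contained rewritings (whose semantics must contain $r(a,x)$) recalled after Definition~\ref{def:eqv}.

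Next I would show every filter has the form $\gamma_j=a$, using the completeness half. Suppose some $\delta_j\neq a$; I would derive a contradiction by exhibiting an active database carrying a $q$-answer that the filter discards. Starting from the frozen database $\hat J$ above (on which $\pi_a$ is active), I add a single fresh fact $r(a,c^{*})$ with $c^{*}$ a brand-new constant, giving $\hat J^{+}$. Adding facts never destroys embeddings, so the filter-free plan is still active and $c^{*}\in q(\hat J^{+})$; completeness then forces $c^{*}\in\pi_a(\hat J^{+})$. Any embedding producing $c^{*}$ must send $x$ to $c^{*}$, and since $c^{*}$ occurs only in $r(a,c^{*})$ and $x$ occurs only in the last atom, the embedding is pinned on the output end while the remaining prefix of the skeleton must still embed into the old part $\hat J$. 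Tracking the value that the filtered variable $\gamma_j$ is forced to take along this prefix, I would argue that $\gamma_j=\delta_j$ cannot be met when $\delta_j$ is a constant different from $a$, or a variable whose forced value differs, unless the filter is vacuous; a vacuous filter may be dropped without changing the plan, so in reduced form no such filter survives. Hence $\delta_j=a$, completing well-filtering.

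The hard part will be this last step. Unlike the atom condition, which drops out of a single frozen database, ruling out variable--variable filters and constant-$\neq a$ filters requires controlling \emph{how} the added answer fact propagates through the shared prefix of the path query and forcing the value seen by $\gamma_j$. The delicate points are (a) that the output variable sits only in the final atom, so the fresh fact constrains the embedding only locally and one must argue the prefix cannot simultaneously satisfy an off-$a$ equality, and (b) separating genuinely restrictive filters from vacuous ones, so that the ``must be well-filtering'' conclusion is properly stated for reduced plans. I expect this to need a small case analysis on whether $\delta_j$ is a constant or a variable and on where $\gamma_j$ occurs in the skeleton.
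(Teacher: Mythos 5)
Your first half (the atom condition) is sound, and it is in fact a cleaner route than the paper's: the paper establishes that the semantics must contain $r(a,x)$ or $r^-(x,a)$ by building an elaborate ``confusion'' database in which $c_1,c_2$ are linked by every relation, whereas you get it directly by freezing the filtered body and invoking the exactness half of smartness on that single canonical instance. (Both you and the paper gloss over plans with inconsistent filters, where your substitution $\theta$ does not exist; this is a minor edge case.)

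The second half, however, has a genuine gap, and it is not the ``small case analysis'' you anticipate: your choice of database cannot work. Because $\hat J$ is the freezing of the \emph{filtered} body, it satisfies every filter by construction -- the offending constant $b$ is baked into $\hat J$ at exactly the right position. Concretely, take the plan with semantics $r_1(a,x_1), r_2(x_1,x_2), r(x_2,x)$ and filters $x_2=a$, $x_1=b$ with $b\neq a$. Then $\hat J=\{r_1(a,b),\,r_2(b,a),\,r(a,\hat x)\}$ and $\hat J^{+}=\hat J\cup\{r(a,c^{*})\}$, and the filtered plan returns exactly $\{\hat x, c^{*}\}=q(\hat J^{+})$: the prefix embeds into the old part of $\hat J$ \emph{while satisfying} $x_1=b$, and the last atom maps to $r(a,c^{*})$. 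So both halves of smartness hold on $\hat J^{+}$, no contradiction arises, and yet this plan is not smart and not well-filtering. Your assertion that ``$\gamma_j=\delta_j$ cannot be met'' along the prefix is false precisely because the database was built to meet it. The fix is the paper's move, which is the opposite of yours: freeze the \emph{filter-free} body with fresh constants chosen to avoid $b$, and add $r(a,c_0)$. On that instance the filter-free plan is active and $q$ has an answer, but any filter mentioning $b$ is unsatisfiable since $b$ does not occur in the instance, so the filtered plan returns $\emptyset\neq q(\mathcal{I})$, contradicting smartness. In short: to kill bad filters you must starve them of their constants, not feed them a canonical database that satisfies them.
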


We have the equivalent of Proposition~\ref{prop:unconstraint-is-path-query} for minimal filtering plans:

\begin{Property}
    \label{prop:minimal-filtering-is-path-query}
    Let $q(x) \leftarrow r(a, x)$ be an atomic query. Let $\pi_a$ be a minimal filtering execution plan associated to $q$. Then the semantics of the sub-function transformation $\mathcal{P}(\pi_a)$ is a path query where the last atom is either $r(a, x)$ or $r^-(x, a)$.
\end{Property}

We also have the equivalent of a theorem in Romero et al.~\cite{romero2020equivalent}.

\begin{restatable}{Theorem}{thmwellfilteringminimalconstructionsmart}\label{thm:well-filtering-minimal-construction-smart}
Given a query $q(x) \leftarrow r(a,x)$, a well-filtering plan $\pi_a$ and the associated minimal filtering plan $\pi_a^{min}$:
\begin{itemize}
    \item If $\pi_a^{min}$ is not smart, then neither is $\pi_a$. 
    \item If $\pi_a^{min}$ is smart, then we can determine in polynomial time if $\pi_a$ is also smart.
\end{itemize}
\end{restatable}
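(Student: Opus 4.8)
The plan is to reduce smartness to completeness, then compare $\pi_a$ with $\pi_a^{min}$, which share a filter-free version and differ only in the placement of their equality filters to $a$. First I would note that every well-filtering plan is \emph{sound}: by Definition~\ref{def:wellfiltering} its filtered semantics contains the atom $r(a,x)$ (equivalently $r^-(x,a)$) with $x$ the output, so every returned $x$ satisfies $r(a,x)\in I$ and lies in $q(I)$. Hence for such plans the requirement of Definition~\ref{def:zsmart-plan} collapses to completeness: on every $I$ where the filter-free version has a result, the plan is smart iff it returns \emph{all} of $q(I)$. Since $\pi_a^{min}$ only deletes and re-inserts filters, the two plans have the same filter-free version, so the premise ``filter-free has a result on $I$'' is identical for both. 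I would work with the sub-function transformation, which preserves smartness (Property~\ref{prop:sub-function-transformation-safe}) and makes the filter-free body a path query whose output $x$ is the unique endpoint (Property~\ref{prop:unconstraint-is-path-query}); thus $x$ occurs in exactly one atom, the last.

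For the first bullet I would prove the stronger statement $\pi_a(I)\subseteq\pi_a^{min}(I)$ for all $I$, which gives the contrapositive at once. Because $x$ sits at the unique endpoint of the path, the only way to satisfy the well-filtering condition is to equate the path variable adjacent to the output with $a$; this is exactly the latest admissible filter, hence the one selected by the minimal filtering plan (Definition~\ref{def:minimal-filtering-plan}), whose semantics is therefore a path query ending in the $r$-atom (Property~\ref{prop:minimal-filtering-is-path-query}). Consequently $\pi_a$ carries this same filter plus possibly further equalities to $a$ at earlier positions, and each extra conjunct can only shrink the answer set. So $\pi_a^{min}$ has the largest answer set among all well-filtering plans with this body, and if it already misses some $x\in q(I)$, then so does $\pi_a$; neither is smart.

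For the second bullet, assume $\pi_a^{min}$ is smart, so that on every relevant $I$ we have $\pi_a(I)\subseteq\pi_a^{min}(I)=q(I)$. Then $\pi_a$ is smart iff the reverse inclusion $\pi_a^{min}(I)\subseteq\pi_a(I)$ also holds, i.e.\ iff the extra equality filters of $\pi_a$ are redundant. This is the conjunctive-query containment $\pi_a^{min}\subseteq\pi_a$, which by the homomorphism theorem asks for a homomorphism from the body of $\pi_a$ into the frozen (canonical) instance of $\pi_a^{min}$ that fixes the output and the constant $a$. Since that instance is a frozen path with $a$ pinned at the marked positions, such a homomorphism amounts to a constrained walk along it, and one can test in polynomial time whether each additional equality $y_j=a$ can be met; the outcome decides the smartness of $\pi_a$.

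The hard part will be the containment claim of the first bullet: I must exclude the possibility that $\pi_a$ realises the well-filtering condition through a filter at a position \emph{different} from $\pi_a^{min}$, which would invalidate the ``same filter plus extras'' comparison. This is exactly where Properties~\ref{prop:unconstraint-is-path-query} and~\ref{prop:minimal-filtering-is-path-query} are essential: they pin $x$ to the unique endpoint of the path and thereby force the realising filter to the penultimate position for every well-filtering plan with this body. A secondary subtlety is the interaction of the ``filter-free has a result'' premise with the containment test in the second bullet; this is benign, since the canonical instance of the filter-free path always yields a result, so restricting to firing instances does not change the test.
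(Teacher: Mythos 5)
Your reduction of smartness to completeness (via soundness of well-filtering plans) is fine, but the step you yourself flag as ``the hard part'' of the first bullet is resolved by a misreading of the paper's properties, and the resulting argument fails. Property~\ref{prop:unconstraint-is-path-query} pins the output $x$ to the endpoint of the path only for \emph{constraint-free} plans; for a filtered plan the sub-function transformation keeps every atom needed by a filter, and Property~\ref{prop:minimal-filtering-is-path-query} explicitly allows the last atom of $\pi_a^{min}$ to be $r^-(x,a)$ --- i.e.\ the realising filter may sit on the variable \emph{after} the output, which is then not an endpoint at all. Concretely, when the body contains $\ldots,r(u,x),r^-(x,v),\ldots$ both $u$ and $v$ are admissible filter positions; Definition~\ref{def:minimal-filtering-plan} makes $\pi_a^{min}$ filter the later one, $v$, while $\pi_a$ may be well-filtering with only the filter $u=a$. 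Then ``$\pi_a$ carries this same filter plus possibly further equalities'' is false, and the containment $\pi_a(I)\subseteq\pi_a^{min}(I)$ on which you build the whole first bullet can genuinely fail before sub-function truncation: with body $s(a,y_1),r(y_1,y_2),r^-(y_2,y_3),t(y_3,y_4)$, output $y_2$, $\pi_a$ filtering $y_1=a$ and $\pi_a^{min}$ filtering $y_3=a$, the instance $I=\{s(a,a),\,r(a,c),\,r(d,c),\,t(d,e)\}$ gives $\pi_a(I)=\{c\}$ but $\pi_a^{min}(I)=\emptyset$. What saves the containment is exactly what your argument skips: apply the sub-function transformation to $\pi_a^{min}$ as well (Property~\ref{prop:sub-function-transformation-safe}), so that its path is truncated immediately after the well-filtering atom (dropping $t(y_3,y_4)$), and then re-route the embedding by setting $\sigma'(v)=a$, which is legal because $\pi_a$'s own well-filtering filter already guarantees $r(a,\sigma(x))\in I$. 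This two-case analysis ($r(a,x)$ versus $r^-(x,a)$) is precisely what the paper's proof retains, however tersely, when it writes ``$r(a,x)$ (or $r^-(x,a)$)''.

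For the second bullet your route differs from the paper's and its logic is sound: you reduce smartness of $\pi_a$ to the containment $\pi_a^{min}\subseteq\pi_a$ and test it by a homomorphism from the path-shaped body of $\pi_a$ into the frozen instance of $\pi_a^{min}$, polynomial for paths even though general CQ containment is not; the paper instead argues on the canonical path database that a filter preserves smartness iff it sits at position $1$ of the backward walk, reusing the proof of Theorem~\ref{thm:bounded-complete}. The two tests are equivalent here, and yours invokes only standard conjunctive-query machinery; note also that to launch this bullet you do not need the problematic inclusion $\pi_a(I)\subseteq\pi_a^{min}(I)$ at all, since soundness of well-filtering plans already gives $\pi_a(I)\subseteq q(I)$ directly.
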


This theorem tells us that it is enough to find minimal filtering smart plans.

Given a plan $\pi_a$, let us consider that we construct the plan $\pi_a'$ as described in Property~\ref{prop:minimal-filtering-is-path-query} by using only sub-function calls associated with the output variables. If $\pi_a'$ is smart, it means that its constraint-free version is weakly smart and thus we follow the description in Section~\ref{sec:charac-smart-plan}: a forward path $F$ followed by a walk in $r^-F$ ending at position 0. Then, from this construction, one can deduce the minimal-filtering smart plans. First, we create plans ending by $r(a, x)$ by adding a filter on the last atom of $pi_a'$. Second, we create plans ending by $r^-(x, a)$ by adding a new atom $r^-(x, a)$ to the semantics of the plan.

Thus, if we have an algorithm to find weakly smart plans, it can easily be extended to generate smart plans as we will see in Section~\ref{sec:generate-smart-plans}.

\subsection{Proof of Lemma~\ref{lem:smart-equivalent_r_atom}}

First, we prove that there cannot be a filter on a constant $b$ different from the input constant $a$. Indeed, let us consider a smart plan $\pi_a$ and its constraint free version $\pi_a'$. The semantics of $\pi_a'$ is $r_1(a, x_1)...r_n(x_{n-1}, x_n)$. We define the database $\mathcal{I}$ as:

\[ \mathcal{I} = r(a, c_0),r_1(a, c_1)...r_n(c_{n-1}, c_n) \]

where $c_1,..., c_n$ are fresh constant different from $a$ and $b$. On $\mathcal{I}$, we have $q(\mathcal{I}) \neq \emptyset$ and $\pi_a'(\mathcal{I}) \neq \emptyset$. So, we must have $\pi_a(\mathcal{I}) = q(\mathcal{I})$. However, if $\pi_a$ contained a filter using a constant $b$, we would have $\pi_a(\mathcal{I}) = \emptyset$ (as $b$ is not in $\mathcal{I}$). This is impossible.

Now, we want to show that the semantics of $\pi_a(x)$ contains at least an atom $r(a, x)$ or $r^-(x, a)$. We still consider a smart plan $\pi_a$ and its constraint free version $\pi_a'$. The semantics of $\pi_a'$ is $r_1(a, x_1)...r_n(x_{n-1}, x_n)$. We define the database $\mathcal{I}$ as:

\begin{align*}
    \mathcal{I} = &r(a, c_0), r_1(a, c_1), r_1(c_1, c_1), ..., r_n(c_1, c_1), r(c_1, c_1),\\
                  &  r_1(c_1, c_1), ..., r_n(c_1, c_2), r(c_1, c_2),\\
                  & r_1(c_2, c_2), ..., r_n(c_2, c_2), r(c_2, c_2), r_1(c_0, c_1), ..., r_n(c_0, c_1), r(c_0, c_1), \\
                  & r_1(c_0, c_2), ..., r_n(c_0, c_2), r(c_0, c_2)
\end{align*}

Let us write the semantics of $\pi_a(x)$ as $r_1(a, y_1)...r_n(y_{n-1}, y_{n})$ where each $y_i$ is either an existential variable, the output variable $x$ or the constant $a$. We call $\mathcal{B}(y_i)$ the set of possible bindings for $y_i$.
We notice that for all $i \in [1, n]$, $y_i$ is a filter to the constant $a$ or $\mathcal{B}(y_i) \neq \{a\}$. This can be seen by considering all transitions for all possible sets of bindings.
We consider the atoms containing the output variable $x$. They are either one or two such atoms. If there is one, it is $r_k(y_{k-1}, x)$ for some $k \in [1, n]$. If $y_{k-1}$ is not a filter to $a$, as we know $\mathcal{B}(y_k) \neq \{a\}$, we have that $\{c_1, c_2\} \subseteq \mathcal{x}$ due to the structure of $\mathcal{I}$. This is a contradiction so $y_{k-1}$ must be a filter to $a$ and $r_k = r$. The same reasoning applies when there are two atoms containing $x$, except that we have one of them which is either $r(a, x)$ or $r^-(x, a)$.

\subsection{Proof of Property~\ref{prop:minimal-filtering-is-path-query}}

Let us decompose $\pi_a$ into a sequence of calls $c_0,..., c_n$. The only call which contains a filter must be the last one (as $x$ only appears in one call). So, we require only one output for the calls $c_0,..., c_{n-1}$ and thus the semantics of there calls is a path query. The last function call is a path function, so the semantics of $\pi_a$ is a path query. We need to show the last atom is either $r(a, x)$ or $r^-(x, a)$. By definition of a minimal filtering plan, the semantics of the plan must contain either $r(a, x)$ or $r(x, a)$ and at most one filter. If the filter is before $x$, then the last atom is $r(a, x)$ and the sub-function can stop there as there are no other filter nor output variable. If the filter is after $x$, we must have $r^-(x, a)$ as the last atom and the sub-function can also stop there.

\subsection{Proof of Theorem~\ref{thm:well-filtering-minimal-construction-smart}}

Let us write $\pi_a'$ the constraint free version of $\pi_a$ which is also the one of $\pi_a^{min}$. We first prove the first point. If $\pi_a^{min}$ is not smart, it means that there exists a database $\mathcal{I}$ such that $q(\mathcal{I}) \neq \emptyset$ and $\pi_a'(\mathcal{I}) \neq \emptyset$ but $\pi_a^{min} \neq q(\mathcal{I})$. As $r(a, x)$ (or $r^-(x, a)$) is in $\pi_a$ and $\pi_a^{min}$ and $\pi_a$ might contain more filters than $\pi_a^{min}$, we have $\pi_a(\mathcal{I}) \subseteq \pi_a^{min}(\mathcal{I}) \subset q(\mathcal{I})$ and thus $\pi_a$ can be smart.

For the second point, we consider that $\pi_a'$ has the form described in Section~\ref{sec:charac-smart-plan}, after the transformation of Property~\ref{prop:minimal-filtering-is-path-query}: $F.B$ where $F$ is a forward path and $B$ is a walk through $F$. Then the correct filters at filter on a variable which is at position $1$ during the backward walk. To see that, we can consider the database $\mathcal{I}$:

\[ \mathcal{I} = r(a, c_0),r_1(a, c_1)...r_n(c_{n-1}, c_n) \]

where the query was $q(x) \leftarrow r(a, x)$ and the semantics of the execution plan was $r_1(a, x_1) \ldots r_n(x_{n-1}, x_n)$. If we follow a proof similar to the one in Section~\ref{sec:proof-weak-smart-complete}, we find that we can keep the result only iff we filter at position $1$ during the backward walk.

\subsection{Proof of Theorem~\ref{Susie-correct}}

Given a Susie plan of the form $\pi=F.F^-.r$, $F^-$ is a walk through $F$. Hence, $F^-.r$ is a walk through $r^-.F$. Hence, the constraint-free version of $\pi$ is a bounded plan (Definition~\ref{def:bounded-plan}). Then, Theorem~\ref{thm:bounded-correct} tells us that $\pi$ is a weakly smart plan. In addition, it ends with $r(a, x)$, so it is smart.

\subsection{Proof of Theorem~\ref{Susie-complete}}

First, let us notice that with these conditions, the plan cannot end by $r^-(x, a)$. Indeed, in this case, it would mean that the two last atoms are $r(y, x).r^-(x, a)$. As no function contains two consecutive atoms $r.r^-$, it means that the last function call is $r^-(x, a)$ and it is useless as we could have created the plan ending by $r(a, x)$ by putting the filter on the previous atom.
Consider any minimal smart plan $P=f_1...f_n$ with forward path $F$ and backward path $B$ for a query $q$. The last function call ($f_n$) will have the form $r_1...r_mq$. Thus, $F$ must start with $r_m^-...r_1^-$ as $f_n$ does not contain loops. Consider the first functions $f_1...f_k$ of $P$ that cover $r_m^-...r_1^-$. $f_k$ must be of the form $r^-_i...r_1^-r_1'...r_l'$ for some $i\in[1,m]$ and some $l\geq 0$. Assume that $l>0$. Since there are no existential variables, we also have a function $f_k'$ of the form $r^-_i...r_1^-$. Then, the plan $f_1...f_{k-1}f_k' f_n$ is smart. Hence, $P$ was not minimal. Therefore, let us assume that $l=0$. Then, $P$ takes the form $P=f_1...f_k f_{k+1}...f_{n-1} f_n$. If $k<n-1$, then we can remove $f_{k+1}...f_{n-1}$ from the plan, and still obtain a smart plan. Hence, $P$ was not minimal. Therefore, $P$ must have the form $P=f_1...f_k f_n$. Since $f_1...f_n$ has the form $r_m^-...r_1^-$, and $f_n$ has the form $r_1...r_mq$, $P$ is a Susie plan.

\subsection{Proof of Theorem~\ref{no-duplicate-ends}}

Assume that there is a weak smart plan $P=f_1...f_n g_1...g_m h_1...h_k$ such that $f_n$ and $g_m$ end at the same position. Then $f_1...f_n h_1...h_k$ is also a weak smart plan. Hence, $P$ is not minimal. Now assume that there is a weak smart plan $P=f_1...f_n g_1...g_m h_1...h_k$ such that $g_1$ and $h_1$ start at the same position. Then again, $f_1...f_n h_1...h_k$ is also a weak smart plan and hence $P$ is not minimal.

\subsection{Proof Of Theorem~\ref{bound-on-plans}}

Theorem~\ref{no-duplicate-ends} tells us that there can be no two positions in the forward path where two function calls end. It means that, at each position, there can be no more than $2\times k$ crossing function calls. 
Therefore, there can be only $M={|\mathcal{F}|}^{2k}$ different states overall. No minimal plan can contain the same state twice (because otherwise, it would be redundant). Hence, there can be at most $M!$ minimal weak smart plans. Indeed, in the worst case, all plans are of length $M$, because if a plan of length $<M$ is weakly smart, then all plans that contain it are redundant.

\subsection{Proof of Theorem~\ref{theorem-algorithm}}

Algorithm~\ref{algo:cikm-algorithm} just calls Algorithm~\ref{search} on all possible starting states. Let us, therefore, concentrate on Algorithm~\ref{search}. This algorithm always maintains one positioned function as the designated forward path function. These designated functions are chained one after the other, and they all point in the forward direction. No other function call can go to a position that is greater than the positions covered by the designated functions (Line~14). Hence, the designated functions form a forward path. All other functions perform a walk in the forward path. Hence, all generated plans are bounded. The plans are also valid execution plans because whenever a function ends in one position, another function starts there (Lines 13, 16, 20). Hence, the algorithm generates only bounded plans, i.e., only weak smart plans.

At any state, the algorithm considers different cases of functions ending and functions starting (Lines~13, 16, 20). There cannot be any other cases, because, in a minimal weak smart plan, there can be at most one single function that starts and at most one single function that ends in any state. In all of the considered cases, all possible continuations of the plan are enumerated (Lines 17, 21, 26). There cannot be continuations of the plan with more than one new function, because minimal weak smart plans cannot have two functions starting at the same state. Hence, the algorithm enumerates all minimal weak smart plans.

We have already seen that the number of possible states is bounded (Theorem~\ref{bound-on-plans}). Since the algorithm keeps track of all states that it encounters (Line~2), and since it never considers the same state twice (Line~1), it has to terminate. Furthermore, since $M$ bounds the size of $H$, the algorithm runs in time $\mathcal{O}(M!)$.

\subsection{Proof of Theorem~\ref{find-single}}

If we are interested in finding only a single plan, we replace $H$ by a set in Line~3 of Algorithm~\ref{algo:cikm-algorithm}. We also remove the pop operation in Line~29 of Algorithm~\ref{search}. In this way, the algorithm will never explore a state twice. Since no minimal weak smart plan contains the same state twice, the algorithm will still find a minimal weak smart plan, if it exists (it will just not find two plans that share a state). Since there are only $|\mathcal{F}|^{2k}$ states overall (Theorem~\ref{bound-on-plans}), the algorithm requires no more than $O(|\mathcal{F}|^{2k})$ steps.

\section{Discussion}

In Romero et al~\cite{romero2020equivalent}, the problem of finding equivalent rewritings is reduced to the problem of finding a word in a context-free grammar. This type of grammar makes it possible to check the emptiness of the language in polynomial time or to compute the intersection with a regular language.
The emptiness operation can be used to check in advance if an equivalent rewriting exists, and the intersection can be used to restrict the space of solutions to valid execution plans. 

In our problem, we also define a language: the language of bounded plans. Unfortunately, it turns out that this language is not context-free. We give a proof in Appendix~\ref{non-context-free}.
This has two consequences: First, it is not trivial to find the intersection with the valid execution plans as it was done in~\cite{romero2020equivalent}.
Second, it explains the exponential complexity bound for our algorithm: our language is more complicated than a context-free grammar, and therefore, there is a priori no reason to believe that the emptiness problem is polynomial.



\subsection{Non Context-Freeness}\label{non-context-free}

Here we prove that we cannot represent smart plans with a context-free grammar in the general case. To do so, we will use a generalization of Olgen's Lemma presented in \cite{Bader82}.

\begin{Lemma}[Bader-Moura's Lemma]
For any context-free language $L$, $\exists n \in \mathcal{N}$ such that $\forall z \in L$, if $d$ positions in $z$ are ``distinguished'' and $e$ positions are "excluded", with $d > n^{e+1}$, then $\exists u, v, w, x, y$ such that $z = uvwxy$ and:
\begin{enumerate}
\item $vx$ contains at least one distinguished position and no excluded positions
\item if $r$  is the number of distinguished positions and $s$ the number of excluded positions in $vwx$, then $r \leq  n^{s+1}$
\item $\forall i \in \mathcal{N}, u.v^i.w.x^i.y \in L$
\end{enumerate}
\end{Lemma}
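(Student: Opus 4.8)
The plan is to prove Bader--Moura's Lemma by the parse-tree argument that underlies Ogden's Lemma, refined so as to track the ``excluded'' positions. First I would fix a context-free grammar $G$ for $L$ in Chomsky normal form and let $V$ be its set of nonterminals; the pumping constant is set to $n = 2^{|V|}$ (the precise base only has to exceed the branching factor raised to $|V|$). Given a word $z \in L$ carrying $d$ distinguished and $e$ excluded positions with $d > n^{e+1}$, I fix a derivation tree $T$ of $z$. Because $G$ is in Chomsky normal form, $T$ is binary, and I single out the \emph{branch nodes}: internal nodes both of whose children have a distinguished leaf in their subtree.

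The combinatorial heart is a counting argument on branch nodes. Descending from the root and always stepping into the child containing more distinguished leaves, I track how many branch nodes are met. In a binary tree the number of distinguished leaves below a node is at most $2^{b}$, where $b$ is the number of branch nodes on the path to it, so the hypothesis $d > n^{e+1} = 2^{|V|(e+1)}$ forces some root-to-leaf path to contain more than $|V|(e+1)$ branch nodes. The $e$ excluded positions can split this chain into at most $e+1$ contiguous stretches that meet no excluded position, so by the pigeonhole principle one such stretch carries more than $|V|$ branch nodes. Among any $|V|+1$ of them, two must be labelled by the same nonterminal $A$.

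These two occurrences of $A$ yield the decomposition $z = uvwxy$: $w$ is the yield of the lower $A$-subtree, $vwx$ is the yield of the upper one, and $u,y$ are the remaining prefix and suffix. Property~(3), namely $uv^iwx^iy \in L$ for all $i$, is the usual substitution of the upper $A$-subtree by the lower one (for $i=0$) or by nested copies of itself (for $i \geq 1$). Property~(1) holds because the upper branch node has distinguished descendants on both sides, so $vx$ contains a distinguished position, and the stretch was chosen free of excluded positions, so $vx$ contains none. For Property~(2), I would take the two $A$-labelled branch nodes as \emph{low} as possible in the excluded-free stretch: the subtree below the upper $A$ then has bounded ``branch height'', and reapplying the same branch-node count to that subtree caps the number $r$ of distinguished and $s$ of excluded positions inside $vwx$ by $r \leq n^{s+1}$.

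I expect the main obstacle to be the bookkeeping in the refined bound~(2): one must choose the repeated pair of $A$-nodes so that the window $vwx$ simultaneously stays inside a single excluded-free block (needed for Property~(1)) and is shallow enough that its distinguished count $r$ and excluded count $s$ obey $r \le n^{s+1}$. Coordinating ``low enough'' with ``excluded-free'' is precisely where Ogden's original argument has to be strengthened, and making the two pigeonhole steps compose cleanly --- one on excluded positions to isolate a clean block, one on nonterminals to locate the repeat --- is the delicate part.
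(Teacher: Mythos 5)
You should first be aware that the paper contains no proof of this lemma: it is imported verbatim from Bader and Moura's work \cite{Bader82} and used as a black box in the non-context-freeness argument of Appendix~\ref{non-context-free}. So there is no internal proof to compare against, and your attempt has to stand on its own. Its architecture is the standard one for this generalization of Ogden's lemma --- CNF grammar, branch nodes, majority-descent path, a pigeonhole over the at most $e+1$ excluded-free stretches of the path, then a pigeonhole over nonterminals --- and properties (1) and (3) do come out of it essentially as you describe.

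The genuine gap is in property (2), exactly at the point you flag as ``delicate'' but leave unresolved, and as stated your selection rule fails. Taking \emph{any} stretch with more than $|V|$ branch nodes and then choosing the repeated pair ``as low as possible in the excluded-free stretch'' does not bound $r$: the stretch produced by your pigeonhole may sit high in the tree, while a \emph{lower} clean stretch, separated from it by a single contaminated node, carries a huge number $B$ of branch nodes. All of those lie below your upper node $N_1$, so $vwx$ may contain on the order of $2^{B}$ distinguished positions while containing only $s=1$ excluded position (the one in the contaminated node's side subtree); then $r \le n^{s+1} = n^{2}$ is simply false for large $B$. The fix is a bottom-up choice: take the \emph{lowest} excluded-free stretch having more than $|V|$ branch nodes, and pick the repeated pair among its lowest $|V|+1$ branch nodes. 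Then every stretch below $N_1$ has at most $|V|$ branch nodes, and the number of such stretches is at most $s$, because each is preceded by a contaminated node whose side subtree puts at least one excluded position into $vwx$. Counting also the contaminated nodes themselves (which may be branch nodes), the branch nodes below $N_1$ number at most $|V|(s+1)+s \le (|V|+1)(s+1)$, whence $r \le 2^{(|V|+1)(s+1)} = n^{s+1}$ provided you set $n = 2^{|V|+1}$ rather than $2^{|V|}$; the same adjustment is needed in your first pigeonhole, since the contaminated nodes separating the stretches are excluded from them but still count toward the total of branch nodes on the path. Without this bottom-up coordination, clause (2) --- the one that distinguishes Bader--Moura from Ogden --- is not established.
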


\begin{Theorem}
For $|\mathcal{R}| > 4$, the smart plans are not context-free.
\end{Theorem}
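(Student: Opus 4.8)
The object to analyse is the language $L$ of \emph{skeletons} of bounded plans. By Theorems~\ref{thm:bounded-correct} and~\ref{thm:bounded-complete} the weakly smart plans are exactly the bounded plans, and by Definition~\ref{def:bounded-plan} such a skeleton has the form $P\,W$, where $P=\rho_1\ldots\rho_m$ (each $\rho_i$ equal to some $r$ or $r^-$) is the forward path and $W$ is the string emitted by a walk through $r^-P$ starting at position $m+1$ and ending at position $0$. Saying that smart plans are representable by a context-free grammar is exactly the assertion that $L$ is context-free, so it suffices to refute this. The plan is to apply Bader--Moura's Lemma, exploiting both its distinguished/excluded-position bookkeeping and the closure of context-free languages under intersection with regular languages.

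First I would fix a regular language $R$ that isolates a rigid sublanguage of $L$ and work with $L\cap R$ (which is context-free whenever $L$ is). The witnesses come from walks that sweep \emph{left, then fully right, then left again}: for $P=a^nb^n$ the three sweeps through $r^-P$ emit, respectively, $(b^-)^n(a^-)^n r$, then $r^-a^nb^n$, then $(b^-)^n(a^-)^n r$, so that
\[
z_n \;=\; a^n\,b^n\,(b^-)^n\,(a^-)^n\, r\; r^-\; a^n\,b^n\,(b^-)^n\,(a^-)^n\, r \;\in\; L .
\]
Intersecting with $R=a^{+}b^{+}(b^-)^{+}(a^-)^{+}\,r\,r^-\,a^{+}b^{+}(b^-)^{+}(a^-)^{+}\,r$ should leave only strings whose eight content blocks are constrained by the walk-consistency of $W$ through $r^-P$. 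The heart of the argument is that these constraints force at least three of the blocks to share one common length, so $L\cap R$ contains an $a^{n}b^{n}c^{n}$-style family that no context-free grammar can generate. To invoke Bader--Moura directly on such a witness I would mark as \emph{distinguished} the positions inside one block and \emph{exclude} the single-symbol separators $r,r^-$; this pins any iterable factors $v,x$ into the structured region, and since $vwx$ has bounded length it meets at most two adjacent blocks. Pumping to $u v^2 w x^2 y$ then changes the length of at most two blocks while leaving a third fixed, breaking the forced equality, so the pumped word is no longer a valid $P\,W$ and leaves $L$.

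The main obstacle is precisely the verification that the pumped word genuinely leaves $L$, i.e.\ that the grammar cannot ``re-parse'' it by choosing a \emph{different} split into a forward path $P'$ and a walk $W'$ that happens to realise the unbalanced block pattern. This is where the hypothesis $|\mathcal{R}|>4$ enters: with enough distinct relations I would assign fresh symbols to the structural markers (and to the forward/backward roles) so that the $P/W$ boundary and the direction of each traversed edge are forced by the letters themselves, ruling out alternative walks through $r^-P'$ that could absorb the perturbation. Concretely, the extra relations let me strengthen $R$ until $L\cap R$ is \emph{exactly} the intended $a^nb^n\cdots$ language, at which point its non-context-freeness is routine.

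I expect the delicate step to be the combinatorial analysis of which block-length vectors are achievable by \emph{some} walk through $r^-P$ — that is, characterising $L\cap R$ rather than merely exhibiting one family inside it. The Bader--Moura distinguished/excluded machinery is what makes this tractable: it lets me ignore the long ``trivial'' runs of $a^-$ and the isolated $r,r^-$ (which a naive pumping lemma could pump harmlessly) and concentrate all the iteration onto the balanced blocks whose simultaneous equality a walk cannot maintain after pumping. Once that balance obstruction is established, the three conclusions of the lemma are contradicted, and the theorem follows.
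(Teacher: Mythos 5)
Your overall strategy is the same as the paper's (Bader--Moura with the $b^{\pm}$-blocks distinguished and the markers excluded, then a contradiction because balanced pumping can change only two of several blocks whose lengths the walk ties together), but there is a genuine gap at exactly the step you yourself flag as ``the main obstacle'', and your concrete witness family demonstrably fails. In $z_n = a^n b^n (b^-)^n (a^-)^n r\, r^-\, a^n b^n (b^-)^n (a^-)^n r$ nothing pins down where the forward path ends or where the walk turns: a walk may reverse direction at \emph{any} position, not only at the two ends. Concretely, pump $v=b^j$ inside the first $b$-block and $x=(b^-)^j$ inside the first $(b^-)$-block. The pumped word $a^n b^{n+j} (b^-)^{n+j} (a^-)^n r\, r^-\, a^n b^n (b^-)^n (a^-)^n r$ is still in $L$: take the forward path $P'=a^n b^{n+j}$, sweep backward from position $2n+j+1$ to $0$ (emitting $(b^-)^{n+j}(a^-)^n r$), sweep forward only up to position $2n+1$ (emitting $r^- a^n b^n$), then sweep backward to $0$ (emitting $(b^-)^n (a^-)^n r$). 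This word also still matches your regular filter $R$, so intersecting does not help, and Bader--Moura's decomposition may well be precisely this one; no contradiction follows. For the same reason your central structural claim about $L\cap R$ is false: the achievable $b$-block length vectors include $(q,s,s,q)$ for arbitrary independent $q,s$ (take the long forward path $a^p b^q (b^-)^s (a^-)^t r$ followed by an all-backward walk), so three blocks are \emph{not} forced to share a common length, and $L\cap R$ is not an $a^nb^nc^n$-style language. A secondary bookkeeping issue: excluding only the two symbols $r,r^-$ does not confine $v,x$ to the $b^{\pm}$-blocks (they could straddle $a^{\pm}$-blocks), and conclusion~2 of the lemma bounds distinguished positions in $vwx$, not its length, so ``$vwx$ meets at most two adjacent blocks'' does not follow.

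The paper closes precisely this hole with the one idea your proposal lacks: a marker relation $c$ occurring \emph{exactly once}, placed at the tip of the forward path, in the witness $z = a.b^k.a.c.c^-.a^-.b^{-k}.a^-.a.b^k.a.a^-.b^{-k}.a^-$ (this is where $|\mathcal{R}|>4$ enters: one needs $a,b,c$ distinct from each other and from the query relation). Since a walk through $r^-P$ can only emit relations occurring in $r^-P$, while $P$ is a prefix of the word, the unique $c$ forces the forward path to be exactly $a\,b^k\,a\,c$ in the pumped word as well; once the forward path is pinned, every traversal of the $b$-segment must emit exactly as many $b^{\pm}$ as that segment contains, tying all four $b$-blocks to the first. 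Only then does balanced pumping (forced into the $b^{\pm}$-blocks by excluding all occurrences of $a^{\pm},c^{\pm}$, and forced to be length-balanced by the paper's counting lemma on $r$ versus $r^-$) produce a word outside the language. Note that the repair must go into the \emph{witness}, not into $R$: no strengthening of the regular filter can rule out the re-parse above, because the pumped word remains a legitimate bounded path; the rigidity has to be built into the word itself via a uniquely-occurring relation, which is exactly the paper's device and is the entire content of the proof you deferred.
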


\begin{proof}

We begin giving a property over words of our grammar.

\begin{Lemma}
For each relation $r$ different from the query, the number of $r$ in a word is equal to the number of $r^-$ in this word.
\end{Lemma}

Let us suppose our grammar is context-free. We define $n$ according to the Barder-Moura's lemma. Let $a, b, c$ be three distinct relations in $|R|$ (they exist as $|\mathcal{R}| > 4$). Let $k = n^{10 + 1}$. We consider the word $z = a.b^k.a.c.c^-.a^-.b^{-k}.a^-.a.b^k.a.a^-.b^{-k}.a^-$ in which we distinguish all the $b$ and $b^-$ and exclude all the $a$, $a^-$, $c$, and $c^-$.

We write $z = u.v.w.x.y$. As $v.x$ contains no excluded positions, $v$ contains only $b$s or only $b^-$s (and the same is right for $x$ as well). As we must keep the same number of $b$ and $b^-$ according to the previous lemma, either $v = b^j$ and $x = b^{-j}$ or $v = b^{-j}$ and $x = b^{j}$ (for some $j \in \mathcal{N}$).

In $z$, it is clear that $a.b^k.a.c$ is the forward path $F$ as $c$ appears only once. $c^-$ and the last $a^-.b^{-k}.a^-$ are generated by the $B$. $a^-.b^{-k}.a^-.a.b^k.a$ is generated by $L$.

The forward-path defines the number of consecutive $b$s between two $a$s (and so the number of consecutive $b^-$s between two $a^-$s). In these conditions, it is impossible to make the four groups of $b$ and $b^-$ vary the same way, so it is impossible to define $v$ and $x$, and so the grammar is not context-free.

\end{proof}}

\end{document}